\title{The I/O complexity of hybrid algorithms\\for square matrix multiplication}
\author{Lorenzo De Stefani\thanks{lorenzo@cs.brown.edu}}
\affil{Department of Computer Science, Brown University}
\newtheorem{theorem}{Theorem}
\newtheorem{lemma}[theorem]{Lemma}
\newtheorem{definition}{Definition}
\newcommand{\ri}{\mathcal{R}}
\newcommand{\BO}[1]{\mathcal{O}\left(#1\right)}
\newcommand{\BOme}[1]{\Omega\left(#1\right)}
\newcommand{\BT}[1]{\Theta\left( #1\right)}
\newcommand{\io}{I/O}
\newcommand{\alg}{\mathcal{A}}
\newcommand{\nproc}{P}
\newcommand{\dom}{D}
\newcommand{\setproof}{Q}
\newcommand{\globalInput}{\mathcal{X}}
\newcommand{\subpInput}{\mathcal{Y}}
\newcommand{\subpOutput}{\mathcal{Z}}
\newcommand{\msgsize}{B_m}
\newcommand{\nz}{n_0}
\newcommand{\hmm}{\mathfrak{H}}
\newcommand{\uhmm}{\mathfrak{UH}\left(\nz{}\right)}
\begin{document}
\begin{titlepage}
\pagenumbering{gobble}
\maketitle

\begin{abstract}
Asymptotically tight lower bounds are derived for the~\io{} complexity of a general class of hybrid algorithms computing the product of $n \times n$  square matrices combining  ``\emph{Strassen-like}''  fast matrix multiplication approach with  computational complexity $\BT{n^{\log_2 7}}$, and ``\emph{standard}'' matrix multiplication algorithms with computational complexity $\Omega\left(n^3\right)$. 
We present a novel and tight $\Omega\left(\left(\frac{n}{\max\{\sqrt{M},\nz{}\}}\right)^{\log_2 7}\left(\max\{1,\frac{\nz{}}{M}\}\right)^3M\right)$ lower bound for the \io{} complexity  a class of ``\emph{uniform, non-stationary}'' hybrid algorithms when executed in a two-level storage hierarchy with $M$ words of fast
memory, where $\nz{}$ denotes the threshold size of sub-problems  which are computed using standard algorithms with algebraic complexity  $\Omega\left(n^3\right)$.  

The lower bound is actually derived for the more
general class of ``\emph{non-uniform, non-stationary}'' hybrid algorithms which allow recursive calls to have a different structure, even when they refer to the multiplication of matrices of the same size and in the same recursive level, although the quantitative expressions become more involved. Our results are the first \io{} lower bounds for these classes of hybrid algorithms.
All presented lower bounds apply even if the
recomputation of partial results is allowed and are asymptotically tight.

The proof technique combines the analysis of the Grigoriev's
flow of the matrix multiplication function, combinatorial properties of the encoding functions used by fast Strassen-like algorithms, and an application of the Loomis-Whitney geometric theorem for the analysis of standard matrix multiplication algorithms.
Extensions of the lower bounds for a parallel model with $\nproc{}$ processors are also discussed.

\end{abstract}
\end{titlepage}

\pagenumbering{arabic}

\section{Introduction}\label{sec:intro}
Data movement plays a critical role in  the performance of computing systems, in terms
of both time and energy. This technological
trend~\cite{patterson2005getting} appears destined to continue, as physical
limitations on minimum device size and on maximum message speed lead
to inherent costs when moving data, whether across the levels of a
hierarchical memory system or between processing elements of a
parallel system~\cite{bilardi1995horizons}. While the  communication
requirements of algorithms have been widely investigated in literature,
obtaining significant and tight lower bounds based on such requirements remains
an important and challenging task.

In this paper, we focus on the \io{} complexity of a general class of hybrid algorithms for the computing the product of square matrices which combine. Such algorithms combine  \emph{fast} algorithms with base case $2\times 2$ similar to  Strassen's matrix
multiplication algorithm~\cite{strassen1969gaussian} with algebraic (or \emph{computational}) complexity  $\BO{n^
{\log_2 7}}$ with \emph{standard} (or \emph{classic}) matrix multiplication algorithms with algebraic complexity $\BOme{n^
3}$.  Further, these algorithms allow recursive calls to have a different structure, even when
they refer to the multiplication of matrices in the same recursive level and of the same input size. These algorithms are referred in literature as ``\emph{non-uniform, non-stationary}''. This class includes, for example, algorithms that optimize for input sizes~\cite{desprez2004impact,douglas1994gemmw,huss1996implementation}. Matrix multiplication is a pervasive
primitive utilized in many applications.

While of actual practical importance, to the best of our knowledge, no characterization of the \io{} complexity of such algorithms has presented before this work. This is likely due to the the fact that the irregular nature of hybrid algorithms and, hence, the irregular structure of the corresponding Computational Directed Acyclic Graphs (CDAGs), complicates the analysis of the combinatorial properties of the CDAG which is the foundation of many of \io{} lower bound technique presented in literature (e.g.,~\cite{ballard2012graph,jia1981complexity,savage1995extending}).

The technique used in this work overcomes such challenges and yields asymptotically tight \io{} lower bounds which hold even if recomputation of intermediate values is allowed.

\paragraph{Previous and Related Work:} Strassen~\cite{strassen1969gaussian} showed that two $n \times n$
matrices can be multiplied with $O(n^{\omega})$ operations, where $\omega = \log_2
7 \approx 2.8074$, hence with asymptotically fewer than the $n^3$
arithmetic operations required by the straightforward implementation
of the definition of matrix multiplication. This result has motivated
a number of efforts which have lead to increasingly faster algorithms,
at least asymptotically, with the current record being at $\omega <
2.3728639$~\cite{legall2014}.

\io{} complexity has been introduced in the seminal work by Hong and
Kung~\cite{jia1981complexity}; it is essentially the number of data
transfers between the two levels of a memory hierarchy with a fast
memory of $M$ words and a slow memory with an unbounded number of
words. Hong and Kung presented techniques to develop lower bounds to the \io{}
complexity of computations modeled by \emph{computational directed acyclic graphs} (CDAGs). The resulting lower bounds
apply to all the schedules of the given CDAG, including those with
recomputation, that is, where some vertices of the CDAG are evaluated
multiple times. Among other results, they established a $\BOme{n^3/
  \sqrt{M}}$ lower bound to the \io{} complexity of standard, definition-based
matrix multiplication algorithms, which matched a known 
upper bound~\cite{cannon1969cellular}. The techniques
of~\cite{jia1981complexity} have also been extended to obtain tight
communication bounds for the definition-based matrix multiplication in
some parallel settings~\cite{ballard2012brief,irony2004communication,scquizzato2013communication} and for the special case of ``\emph{sparse matrix multiplication}''~\cite{pagh2014input}. Ballard et al.  generalized the results on matrix multiplication of
Hong and Kung~\cite{jia1981complexity} in~\cite{ballard2011minimizing,ballard2010communication} by using the approach proposed
in~\cite{irony2004communication} based on the Loomis-Whitney geometric
theorem~\cite{loomis1949,zalg}. 

In an important contribution, Ballard et
al.~\cite{ballard2012graph}, obtained an
$\Omega((n/\sqrt{M})^{\log_2 7}M)$ \io{} lower bound for Strassen's
algorithm, using the ``\emph{edge expansion approach}''. The authors
extend their technique to a class of ``\emph{Strassen-like}'' fast
multiplication algorithms and to fast recursive multiplication
algorithms for rectangular matrices~\cite{ballard2012graphrec}. This
result was later generalized to increasingly broader classes of
``\emph{Strassen-like}'' algorithms by Scott
et. al~\cite{scott2015matrix} using the ``\emph{path routing}''
technique, and De Stefani~\cite{thesis} using a combination the concept of Grigoriev's flow of a function and the ``\emph{dichotomy width}'' technique~\cite{bilardi1999processor}.
While the previously mentioned results hold only under the restrictive assumption that no intermediate result may be more than once (i.e., the \emph{no-recomputation assumption}), in~\cite{bilardi2017complexity} Bilardi and De Stefani introduced the first asymptotically tight \io{} lower bound which holds if recomputation is allowed. Their technique was later extended to the analysis of Strassen-like algorithms with base case $2\times 2$~\cite{ipdps2019}, and to the analysis of Toom-Cook integer multiplication algorithms~\cite{BilardiS19}.

A parallel, ``\emph{communication avoiding}''
implementation of Strassen's algorithm whose performance matches the
known lower bound~\cite{ballard2012graph,scott2015matrix}, was
proposed by Ballard et al.~\cite{ballard2012communicationalg}. A communication efficient algorithm for the special case of sparse matrices based on Strassen's algorithm was presented in~\cite{jacob2015fast}.

In~\cite{scott2015complexity}, Scott derived a lower bound for the \io{} complexity of a class of uniform, non-stationary algorithms combining Strassen-like algorithm with recursive standard algorithms. This result holds only under the restrictive no-recomputation assumption. 

To the best of our knowledge, ours is the first work presenting asymptotically tight \io{} lower bounds for non-uniform, non-stationary hybrid algorithms for matrix multiplication that hold when recomputation is allowed.


\paragraph*{Our results:}
We present the first \io{} lower bound for a class $\hmm{}$ of non-uniform, non-stationary hybrid matrix multiplication algorithms when executed in a two-level storage hierarchy with $M$ words of fast
memory. Algorithms in $\hmm{}$ combine fast Strassen-like algorithms with base case $2\times 2$ with algebraic complexity $\BT{n^{\log_2 7}}$, and standard algorithms based on the definition with algebraic complexity $\BOme{n^{3}}$. These algorithms allow recursive calls to have a different structure, even when they refer to the multiplication of matrices in the same recursive level and of the same input size. 
The result in Theorem~\ref{thm:genmatmul} relates the \io{} complexity of algorithms in $\hmm{}$ to the number and the input size of an opportunely selected set of the sub-problems generated by the algorithms themselves.

We also present, in Theorem~\ref{thm:corgenmatmul}, a novel $\Omega\left(\left(\frac{n}{\max\{\sqrt{M},\nz{}\}}\right)^{\log_2 7}\left(\max\{1,\frac{\nz{}}{M}\}\right)^3M\right)$ lower bound for the \io{} complexity of algorithms in a subclass $\uhmm{}$ of $\hmm{}$ composed by \emph{uniform non-stationary} hybrid algorithms where $\nz{}$ denotes the threshold size of sub-problems  which are computed using standard algorithms with algebraic complexity  $\BOme{n^3}$.

The previous result by Scott~\cite{scott2015complexity} covers only a sub-class of $\uhmm{}$ composed by uniform, non-stationary algorithms combining Strassen-like algorithms with the recursive standard algorithm, and holds only assuming that no intermediate value is recomputed. Instead, all our bounds allow for recomputation of intermediate values and are asymptotically tight. As the matching upper bounds do not recompute any intermediate value, we conclude that using recomputation may reduce the \io{} complexity of the considered classes of hybrid algorithms by at most a constant factor.

Our proof technique is of independent interest since it exploits to a significant extent the ``\emph{divide and conquer}'' nature exhibited by many algorithms. Our approach combines elements from the ``\emph{G-flow}'' \io{} lower bound technique originally introduced by Bilardi and De Stefani, with an application of the Loomis-Whitney geometric theorem, which has been used by Irony et al. to study the \io{} complexity of standard matrix multiplication algorithms~\cite{irony2004communication},  to recover an information which relates to the concept of \emph{Minimum set} introduced in Hong and Kung's method. We follow the dominator set approach pioneered by Hong and Kung
in~\cite{jia1981complexity}. However, we focus the dominator analysis
only on a select set of target vertices, which, depending on the algorithm structure, correspond either to the outputs of
the sub-CDAGs  that correspond to sub-problems of a
suitable size (i.e., chosen as a function of the fast memory
capacity $M$) computed using a fast Strassen-like algorithm, or to the the vertices corresponding to the elementary products evaluated by a standard (definition) matrix multiplication algorithm.

We derive our main results for the hierarchical memory model (or external memory model). Our results generalize to parallel models with~\nproc{} processors. For these parallel models, we derive lower bounds for the ``\emph{bandwith cost}'', that is for the number of messages (and, hence, the number of memory) that must be either sent or received by at least one processor during the CDAG evaluation.

\paragraph*{Paper organization:} In Section~\ref{sec:preliminaries} we outline the notation and the computational models used in the rest of the presentation. In Section~\ref{sec:algdef} we rigorously define the class of hybrid matrix multiplication algorithms $\hmm{}$ being considered. In Section~\ref{sec:CDAG} we discuss the construction and important properties of the CDAGs corresponding to algorithms in $\hmm{}$. In Section~\ref{sec:msp} we introduce the concept of Maximal Sup-Problem (MSP) and describe their properties which lead to the \io{} lower bounds for algorithms in $\hmm{}$ in Section~\ref{sec:mmlwb}.


\section{Preliminaries}\label{sec:preliminaries}
We consider algorithms that compute the product of two square matrices $\mathbf{A}\times \mathbf{B}=\mathbf{C}$ with entries from a ring $\ri{}$. We use $A$ to denote  the set variables each corresponding to an entry of matrix $\mathbf{A}$. We refer to the number of entries of a matrix $\mathbf{A}$ as its  ``\emph{size}'' and we denote it as $|A|$. We denote the entry on the $i$-th row of the $j$-th column of matrix $\mathbf{A}$ as $\mathbf{A}[i][j]$.   

In this work we focus on algorithms whose execution can be modeled
as a \emph{computational directed acyclic graph} (CDAG) $G=(V,E)$.
Each vertex $v\in V$ represents either an input value or the
result of a unit-time operation (i.e., an intermediate result or one
of the output values) which is stored using a single memory word. For example, each of the input (resp., output) vertices of $G$ corresponds to one of the $2n^2$ entries of the factor matrices $\mathbf{A}$ and $\mathbf{B}$ (resp., to the $n^2$ entries of the product matrix $\mathbf{C}$). The \emph{directed} edges in $E$ represent data dependencies. That is, a pair of vertices $u,v\in V$ are connected by an edge $(u,v)$ directed from $u$ to $v$ if and only if the value corresponding to $u$ is an operand of the unit time operation which computes the value corresponding to $v$.  A \emph{directed path} connecting vertices $u,v\in V$ is
an ordered sequence of vertices starting with $u$ and ending with $v$, such that there is an edge in $E$ directed from each vertex in the sequence to its successor. 

We say that $G'=(V',E')$ is a \emph{sub-CDAG} of $G=(V,E)$ if
$V'\subseteq V$ and $E' \subseteq E \cap (V'\times V')$. Note that, according to this definition, every CDAG is a sub-CDAG of itself. We say that two sub-CDAGs  $G'=(V',E')$ and $G''=(V'',E'')$ of $G$ are \emph{vertex disjoint} if $V'\cap V''=\emptyset$. Analogously, two directed paths in $G$ are vertex disjoint if they do not share any vertex.

When analyzing the properties of CDAGs we make use of the concept of \emph{dominator set} originally introduced in~\cite{jia1981complexity}. We use the following -- slightly different -- definition:

\begin{definition}[Dominator set]\label{def:dominator}
Given a CDAG $G=(V,E)$, let $I\subset V$ denote the set of its input
vertices. A set $\dom \subseteq V$ is a \emph{dominator set} for $V'\subseteq V$ with respect to $I'\subseteq I$ if every path from a vertex in $I'$ to a vertex in
$V'$ contains at least a vertex of $\dom$. 
When $I'=I$, $\dom$ is simply referred as ``a dominator set for $V'$''.
\end{definition}

\paragraph{\io{} model and machine models:}
We assume that sequential computations are executed on a system with a two-level memory hierarchy, consisting of a fast memory or \emph{cache} of size $M$ (measured in memory words) and a \emph{slow memory} of unlimited size. An operation can be executed only if all its operands are in cache. We assume that each entry of the input and intermediate results matrices (including entries of the output matrix) is maintained in a single memory word (the results trivially generalize if multiple memory words are used).



Data can be moved from the slow memory
to the cache by \texttt{read} operations and, in the other direction,
by \texttt{write} operations. These operations are also called
\emph{\io{} operations}. We assume the input data to be stored in slow
memory at the beginning of the computation. The evaluation of a CDAG in this model can be analyzed by means of the ``\emph{red-blue pebble game}''~\cite{jia1981complexity}. The number of \io{} operations executed when evaluating a CDAG depends on the ``\emph{computational schedule}'', that is, it depends on the order in which vertices are evaluated and on which values are kept in/discarded from cache. 

The \emph{\io{} complexity} $IO_{G}(M)$ of a CDAG $G$ is defined as the minimum number of \io{} operations over all possible computational schedules.
We further consider a generalization of this model known as the ``\emph{External Memory Model}'' by Aggarwal and Vitter~\cite{Aggarwal:1988:ICS:48529.48535},
where $B\geq 1$ values can be moved between cache and consecutive
slow memory locations with a single \io{} operation. For $B=1$, this model clearly reduces to the red-blue pebble game.

Given an algorithm $\mathcal{A}$, we only consider ``\emph{parsimonious execution schedules}'', that is schedules such that: (i) each time an intermediate result (excluding the output entries of $\mathbf{C}$) is computed, such value is then used to computed to compute at least one of the values of which it is an operand before being removed from the memory (either the cache or slow memory); and (ii) any time an intermediate result is read from slow to cache memory, such value is then used to computed to compute at least one of the values of which it is an operand before being removed from the memory or moved back to slow memory using a \emph{write} \io{} operation. Clearly, any non-parsimonious schedule $\mathcal{C}$ can be reduced to a parsimonious schedule $\mathcal{C}'$ by removing all the steps which violate the definition of parsimonious computation. $\mathcal{C}'$ has therefore less computational and \io{} operations than $\mathcal{C}$. Hence, restricting the analysis to parsimonious computations leads to no loss of generality.

We also consider a parallel model where $\nproc$ processors, each with
a local memory of size $2n^2/P\leq M<n^2$, are connected by a network. We do not, however, make any assumption on the initial distribution of the input data nor regarding the balance of the computational load among the $\nproc{}$ processors. Processors can exchange point-to-point
messages, with every message containing up to $\msgsize{}$ memory words. For this parallel model, we derive lower bounds for
the number of messages that must be either sent or received by at least
one processor during the CDAG evaluation. The notion of ``\emph{parsimonious execution schedules}'' straightforwardly extends to this parallel model.
\section{Hybrid matrix multiplication algorithms}\label{sec:algdef}
In this work, we consider a family of hybrid matrix multiplication algorithms obtained by hybridizing the two following classes of algorithms:\\

\noindent\textbf{Standard matrix multiplication algorithms:} This class includes all the square matrix multiplication algorithms which, given the input factor matrices $\mathbf{A},\mathbf{B} \in \ri{}^{n\times n}$, satisfy the following properties: 
    \begin{itemize}
        \item The $n^3$ \emph{elementary products}  $\mathbf{A}[i][j]\mathbf{B}[j][i]$, for $i,j=0,\ldots, n-1$, are \emph{directly computed};
        \item Each of the $\mathbf{C}[i][j]$ is computed by summing the values of the $n$ elementary products $\mathbf{A}[i][z]\mathbf{B}[z][j]$, for $z=0,\ldots, n-1$, through a \emph{summation tree} by additions and subtractions only;
        \item The evaluations of the $\mathbf{C}[i][j]$'s are \emph{independent of each other}. That is, internal vertex sets of the summation trees of all the $\mathbf{C}[i][j]$'s are \emph{disjoint from each other}. 
    \end{itemize}   
    This class, also referred in literature as \emph{classic}, \emph{naive} or \emph{conventional} algorithms, correspond to that studied for the by Hong and Kung~\cite{jia1981complexity} (for the sequential setting) and by Irony et al.~\cite{irony2004communication} (for the parallel setting). 
    Algorithms in this class have computational complexity $\Omega\left({n^3}\right)$. This class includes, among others, the sequential iterative \emph{definition} algorithm, the sequential recursive divide and conquer algorithm based on block partitioning, and parallel algorithms such as Cannon's ``\emph{2D}'' algorithm~\cite{cannon1969cellular}, the ``\emph{2.5D}'' algorithm by Solomonik and Demmel~\cite{solomonik2011communication}, and ``\emph{3D}'' algorithms~\cite{aggarwal1987model,johnsson1993minimizing}.\\

   \noindent\textbf{Fast Strassen-like matrix multiplication algorithms with base case $2\times 2$:} This class includes algorithms following a structure similar to that of Strassen's~\cite{strassen1969gaussian} (see Appendix~\ref{app:strassenalg}) and Winograd's variation~\cite{winograd1971multiplication} (which reduces the leading  coefficient of the arithmetic complexity reduced from
7 to 6). Algorithms in this class generally follow three steps: 
    \begin{enumerate}
        \item{\textbf{Encoding:}} Generate the inputs, of size $n/2\times n/2$  of seven \emph{sub-problems}, as linear sums of the input matrices;
        \item{\textbf{Recursive multiplications:}} Compute (recursively) the seven generated matrix multiplication sub-products;
        \item{\textbf{Decoding:}} Computing the entries of the product matrix $\mathbf{C}$ via linear combinations of the output of the seven sub-problems. 
    \end{enumerate}
    Algorithms in this class have algebraic complexity $\BO{n^{\log_2 7}}$, which is optimal for algorithms with base case $2\times 2$~\cite{winograd1971multiplication}.\\
    
Remarkably, the only properties of relevance for the analysis of the \io{} complexity of algorithms in these classes are those used in the characterization of the classes themselves.\\

In this work we consider a general class of \emph{non-uniform, non-stationary} hybrid square matrix multiplication algorithms, which allow  mixing of schemes from the fast Strassen-like class with algorithms from the standard class. Given an algorithm $\mathcal{A}$ let $P$ denote the problem corresponding to the computation of the product of the input matrices $\mathcal{A}$ and $\mathcal{B}$. Consider an ``\emph{instruction function}''  $f_\mathcal{A}(P)$, which, given as input $P$ returns either (a) indication regarding the algorithm from the standard class which is to be used to compute $P$, or (b) indication regarding the fast Strassen-like algorithm to be used to recursively generate seven sub-problems $P_1,P_2,\ldots,P_7$ and the instruction functions $f_\mathcal{A}(P_i)$ for each of the seven sub-problems. We refer to the class of non-uniform, non-stationary algorithms which can be characterized by means of such instruction functions as $\hmm{}$. Algorithms in $\hmm{}$ allow recursive calls to have a different structure, even when
they refer to the multiplication of matrices in the same recursive level. E.g., some of the sub-problems with the same size may be  computed using algorithms form the standard class while others may be computed using recursive algorithms from the fast class. This class includes, for example, algorithms that optimize for input sizes, (for sizes that are not an integer power of a constant integer). 


We also consider a sub-class $\uhmm {}$ of $\hmm{}$ constituted by \emph{uniform, non-stationary} hybrid algorithms which allow  mixing of schemes from the fast Strassen-like class for the initial $\ell$ recursion levels, and then cut the recursion off once the size the generated sub-problems is smaller or equal to a set threshold $\nz{}\times\nz{}$, and switch to using algorithm form the standard class. Algorithms in this class are \emph{uniform}, i.e., sub-problems of the same size are all either recursively computed using a scheme form the fast class, or are all computed using algorithms from the standard class.

This corresponds to actual practical scenarios, as the use of Strassen-like algorithms is  mostly beneficial for large input size. As the size of the input of the recursively generated sub-problems decreases, the asymptotic benefit of fast algorithms is lost due to the increasing relative impact of the constant multiplicative factor, and algorithms in the standard class exhibit lower \emph{actual} algebraic complexity. For a discussion on such hybrid algorithms and their implementation issues we refer the reader to~\cite{douglas1994gemmw,huss1996implementation}
(sequential model) and~\cite{desprez2004impact} (parallel model).
\section{The CDAG of algorithms in $\hmm{}$}\label{sec:CDAG}
Let $G^{\mathcal{A}}= (V_\mathcal{A},E_\mathcal{A})$ denote the CDAG that corresponds to an algorithm $\mathcal{A}\in\hmm{}$ used to multiply  input  matrices $\mathbf{A}, \mathbf{B}\in \ri{}^{n \times n}$. The challenge in the characterization of $G^{\mathcal{A}}$ comes from the fact that rather than considering to a single algorithm, we want to characterize the CDAG corresponding to the class $\hmm{}$. Further, the class $\hmm{}$ is composed by a rich variety of vastly different and irregular algorithm. Despite such variety, we show a general template for the construction of  $G^{\mathcal{A}}$ and we identify some of it properties which crucially hold \emph{regardless} of the implementation details of $\mathcal{A}$ and, hence, of  $G^{\mathcal{A}}$.
\paragraph*{Construction:}
$G^{\mathcal{A}}$ can be obtained by using a recursive construction that mirrors the recursive structure of the algorithm itself. Let $P$ denote the entire matrix multiplication problem computed by $\mathcal{A}$. 
Consider the case for which, according to the \emph{instruction function} $f_{\mathcal{A}}(P)$, $P$ is to be computed using an algorithm from the standard class. As we do not fix a specific algorithm, we do not correspondingly have a fixed CDAG. The only feature of interest for the analysis is that, in this case, the CDAG $G^{\mathcal{A}}$  corresponds to the execution of an algorithm from the standard class for input matrices of size $n \times n$.

Consider instead the case for which, according to $f_{\mathcal{A}}(P)$, $P$ is to be computed using an algorithm from the fast class. In the base case for $n=2$ the problem $P$ is computed without generating any further sub-problems. As an example, we present in Figure~\ref{fig:bsestra} the base case for Strassen's original algorithm~\cite{strassen1969gaussian}. If $n>2$, then $f_{\mathcal{A}}(P)$ specifies the divide and conquer scheme to be used to generate the seven sub-problems $P_1,P_2,\ldots,P_7$, and the \emph{instruction function} for each of them. The sub-CDAGs of $G^{\mathcal{A}}$ corresponding to each of the seven sub-problems $P_i$, denoted as $G^{\mathcal{A}}_{P_i}$ are constructed according to $f_{\mathcal{A}}(P_i)$, following recursively the steps discussed previously.
$G^{\mathcal{A}}$ can then be constructed by composing the seven sub-CDAGs $G^{\mathcal{A}}_{P_i}$.  
  $n^2$ disjoint copies of an \emph{encoder sub-CDAG} $Enc_A$ (resp., $Enc_B$) are used to connect the input vertices of $G^{2n \times 2n}$, which correspond to the values of the input matrix $\mathbf{A}$ (resp., $\mathbf{B}$) to the appropriate input vertices of the seven sub-CDAGs $G^{\mathcal{A}}_{P_i}$; the output vertices of the sub-CDAGs $G^{\mathcal{A}}_{P_i}$ (which correspond to the outputs of the seven sub-products) are connected to the appropriate output vertices of the entire $G^{\mathcal{A}}$ CDAG  using $n^2$ copies of the decoder sub-CDAG $Dec$.  We present an example of such recursive construction in Figure~\ref{fig:strarec}.

  \begin{figure}[bt]
  \begin{subfigure}{.48\textwidth}
  \centering
     \resizebox{\linewidth}{!}{
     \begin{tikzpicture}[scale = 0.5,
            > = stealth, 
            shorten > = 1pt, 
            auto,
            node distance = 3cm, 
            semithick 
        ]

        \tikzstyle{every state}=[
            draw = black,
            thick,
            minimum size = 4mm
        ]
		\node[state] at (-1,0) (A11) [label=below:$\mathbf{A}{[0]}{[0]}$] {};
        \node[state] at (1.7,0) (A12)  [label=below:$\mathbf{A}{[0]}{[1]}$] {};
        \node[state] at (4.3,0) (A21)  [label=below:$\mathbf{A}{[1]}{[0]}$] {};
        \node[state] at (7,0) (A22)  [label=below:$\mathbf{A}{[1]}{[1]}$] {};
        \node[state] at (-3,4) (A7)  [fill = blue] {};
        \node[state] at (-1,4) (A5)  [fill = blue] {};
        \node[state] at (1,4) (A4)    {};
        \node[state] at (3,4) (A1)   [fill = blue]{};
        \node[state] at (5,4) (A3)    {};
        \node[state] at (7,4) (A2)   [fill = blue] {};
        \node[state] at (9,4) (A6)   [fill = blue] {};

        \path[->] (A11) edge  (A5);
        \path[->] (A11) edge  (A1);
        \path[->] (A11) edge  (A3);
        \path[->] (A11) edge  (A6);
        
        \path[->] (A12) edge  (A7);
        \path[->] (A12) edge  (A5);
        
        \path[->] (A21) edge  (A2);
        \path[->] (A21) edge  (A6);
        
        \path[->] (A22) edge  (A2);
        \path[->] (A22) edge  (A1);
        \path[->] (A22) edge  (A4);
        \path[->] (A22) edge  (A7);
        
        \node[state] at (13,0) (B11) [label=below:$\mathbf{B}{[0]}{[0]}$] {};
        \node[state] at (15.7,0) (B12)  [label=below:$\mathbf{B}{[0]}{[1]}$] {};
        \node[state] at (18.3,0) (B21)  [label=below:$\mathbf{B}{[1]}{[0]}$] {};
        \node[state] at (21,0) (B22)  [label=below:$\mathbf{B}{[1]}{[1]}$] {};
        \node[state] at (11,4) (B7)  [fill = blue] {};
        \node[state] at (13,4) (B5)  {};
        \node[state] at (15,4) (B4)  [fill = blue] {};
        \node[state] at (17,4) (B1)  [fill = blue] {};
        \node[state] at (19,4) (B3)  [fill = blue] {};
        \node[state] at (21,4) (B2)   {};
        \node[state] at (23,4) (B6)  [fill = blue] {};

        \path[->] (B11) edge  (B4);
        \path[->] (B11) edge  (B1);
        \path[->] (B11) edge  (B2);
        \path[->] (B11) edge  (B6);
        
        \path[->] (B12) edge  (B3);
        \path[->] (B12) edge  (B6);
        
        \path[->] (B21) edge  (B7);
        \path[->] (B21) edge  (B4);
        
        \path[->] (B22) edge  (B7);
        \path[->] (B22) edge  (B1);
        \path[->] (B22) edge  (B5);
        \path[->] (B22) edge  (B3);

        \node[state] at (6,12) (C11) [label=above:$\mathbf{C}{[0]}{[0]}$] {};
        \node[state] at (8.7,12) (C12)  [label=above:$\mathbf{C}{[0]}{[1]}$] {};
        \node[state] at (11.3,12) (C21)  [label=above:$\mathbf{C}{[1]}{[0]}$] {};
        \node[state] at (14,12) (C22)  [label=above:$\mathbf{C}{[1]}{[1]}$] {};
        \node[state] at (2.5,8) (7)  [fill = red, label=left:$M_7$] {};
        \node[state] at (5,8) (5)  [fill = red, label=left:$M_5$] {};
        \node[state] at (7.5,8) (4)  [fill = red, label=left:$M_4$] {};
        \node[state] at (10,8) (1)  [fill = red, label=left:$M_1$] {};
        \node[state] at (12.5,8) (3)  [fill = red, label=left:$M_3$] {};
        \node[state] at (15,8) (2)  [fill = red, label=left:$M_2$] {};
        \node[state] at (17.5,8) (6)  [fill = red, label=left:$M_6$] {};
        
        \node[state, draw= white] at (7,1.5) (ena)  [label=right: \LARGE $Enc_A$] {};
        \node[state, draw= white] at (20,1.5) (enb)  [label=right: \LARGE $Enc_B$] {};
        \node[state, draw= white] at (0,10.5) (dec)  [label=right: \LARGE$Dec$] {};

        \path[<-] (C11) edge  (4);
        \path[<-] (C11) edge  (1);
        \path[<-] (C11) edge  (7);
        \path[<-] (C11) edge  (5);
        
        \path[<-] (C12) edge  (3);
        \path[<-] (C12) edge  (5);
        
        \path[<-] (C21) edge  (2);
        \path[<-] (C21) edge  (4);
        
        \path[<-] (C22) edge  (6);
        \path[<-] (C22) edge  (1);
        \path[<-] (C22) edge  (2);
        \path[<-] (C22) edge  (3);
        
        \path[->] (A7) edge  (7);
        \path[->] (B7) edge  (7);
        \path[->] (A5) edge  (5);
        \path[->] (B5) edge  (5);
        \path[->] (A4) edge  (4);
        \path[->] (B4) edge  (4);
        \path[->] (A1) edge  (1);
        \path[->] (B1) edge  (1);
        \path[->] (A3) edge  (3);
        \path[->] (B3) edge  (3);
        \path[->] (A2) edge  (2);
        \path[->] (B2) edge  (2);
        \path[->] (A6) edge  (6);
        \path[->] (B6) edge  (6);
    \end{tikzpicture}
     }
     \caption{$G^{\mathcal{A}}$ CDAG for base case $n=2$, using Strassen's algorithm~\cite{strassen1969gaussian} (see Appendix~\ref{app:strassenalg}).}
     \label{fig:bsestra}
  \end{subfigure}	
  \hspace{1mm}
  \begin{subfigure}{.49\textwidth}
  \centering
     \resizebox{\linewidth}{!}{
     	\begin{tikzpicture}[
            > = stealth, 
            shorten > = 1pt, 
            auto,
            node distance = 2cm, 
            semithick 
        ]

        \tikzstyle{every state}=[
            draw = black,
            thick,
            minimum size = 8mm
        ]
		\node[state, tokens=4] at (0,0) (A11) [label=below: \huge $A_{1,1}$] {};
        \node[state, tokens=4] at (2,0) (A12)  [label=below:\huge $A_{1,2}$] {};
        \node[state, tokens=4] at (4,0) (A21)  [label=below:\huge $A_{2,1}$] {};
        \node[state, tokens=4] at (6,0) (A22)  [label=below:\huge $A_{2,2}$] {};
        \node[state, draw = blue, tokens=4] at (-3,4) (A7)   {};
        \node[state, draw = blue, tokens=4] at (-1,4) (A5)   {};
        \node[state, tokens=4] at (1,4) (A4)    {};
        \node[state, draw = blue, tokens=4] at (3,4) (A1)    {};
        \node[state, tokens=4] at (5,4) (A3)    {};
        \node[state, draw = blue, tokens=4] at (7,4) (A2)    {};
        \node[state, draw = blue, tokens=4] at (9,4) (A6)    {};

        \path[->] (A11) edge  (A5);
        \path[->] (A11) edge  (A1);
        \path[->] (A11) edge  (A3);
        \path[->] (A11) edge  (A6);
        
        \path[->] (A12) edge  (A7);
        \path[->] (A12) edge  (A5);
        
        \path[->] (A21) edge  (A2);
        \path[->] (A21) edge  (A6);
        
        \path[->] (A22) edge  (A2);
        \path[->] (A22) edge  (A1);
        \path[->] (A22) edge  (A4);
        \path[->] (A22) edge  (A7);
        
        \node[state, tokens=4] at (14,0) (B11) [label=below:\huge $B_{1,1}$] {};
        \node[state, tokens=4] at (16,0) (B12)  [label=below:\huge $B_{1,2}$] {};
        \node[state, tokens=4] at (18,0) (B21)  [label=below:\huge $B_{2,1}$] {};
        \node[state, tokens=4] at (20,0) (B22)  [label=below:\huge $B_{2,2}$] {};
        \node[state, draw = blue, tokens=4] at (11,4) (B7)   {};
        \node[state, tokens=4] at (13,4) (B5)  {};
        \node[state, draw = blue, tokens=4] at (15,4) (B4)   {};
        \node[state, draw = blue, tokens=4] at (17,4) (B1)  {};
        \node[state, draw = blue, tokens=4] at (19,4) (B3)   {};
        \node[state, tokens=4] at (21,4) (B2)   {};
        \node[state, draw = blue, tokens=4] at (23,4) (B6)   {};

        \path[->] (B11) edge  (B4);
        \path[->] (B11) edge  (B1);
        \path[->] (B11) edge  (B2);
        \path[->] (B11) edge  (B6);
        
        \path[->] (B12) edge  (B3);
        \path[->] (B12) edge  (B6);
        
        \path[->] (B21) edge  (B7);
        \path[->] (B21) edge  (B4);
        
        \path[->] (B22) edge  (B7);
        \path[->] (B22) edge  (B1);
        \path[->] (B22) edge  (B5);
        \path[->] (B22) edge  (B3);

        \node[state, tokens=4] at (7,12) (C11) [label=above:\Huge $C_{1,1}$] {};
        \node[state, tokens=4] at (9,12) (C12)  [label=above:\Huge $C_{1,2}$] {};
        \node[state, tokens=4] at (11,12) (C21)  [label=above:\Huge $C_{2,1}$] {};
        \node[state, tokens=4] at (13,12) (C22)  [label=above:\Huge $C_{2,2}$] {};
        \node[state, draw = red] at (1,8) (7)  {\Huge $G^{\mathcal{A}}_{P_7}$};
        \node[state, draw = red] at (4,8) (5)  {\Huge $G^{\mathcal{A}}_{P_5}$};
        \node[state, draw = red] at (7,8) (4)  {\Huge $G^{\mathcal{A}}_{P_4}$};
        \node[state, draw = red] at (10,8) (1)  {\Huge $G^{\mathcal{A}}_{P_1}$};
        \node[state, draw = red] at (13,8) (3) {\Huge $G^{\mathcal{A}}_{P_3}$};
        \node[state, draw = red] at (16,8) (2)  {\Huge $G^{\mathcal{A}}_{P_2}$};
        \node[state, draw = red] at (19,8) (6)  {\Huge $G^{\mathcal{A}}_{P_6}$};
        
        \node[state, draw= white] at (7,1.5) (ena)  [label=right: \Huge $n^2\times Enc_A$] {};
        \node[state, draw= white] at (19.5,1.5) (enb)  [label=right: \Huge $n^2 \times Enc_B$] {};
        \node[state, draw= white] at (-1,10.5) (dec)  [label=right:\Huge $n^2 \times Dec$] {};

        \path[<-] (C11) edge  (4);
        \path[<-] (C11) edge  (1);
        \path[<-] (C11) edge  (7);
        \path[<-] (C11) edge  (5);
        
        \path[<-] (C12) edge  (3);
        \path[<-] (C12) edge  (5);
        
        \path[<-] (C21) edge  (2);
        \path[<-] (C21) edge  (4);
        
        \path[<-] (C22) edge  (6);
        \path[<-] (C22) edge  (1);
        \path[<-] (C22) edge  (2);
        \path[<-] (C22) edge  (3);
        
        \path[->] (A7) edge  (7);
        \path[->] (B7) edge  (7);
        \path[->] (A5) edge  (5);
        \path[->] (B5) edge  (5);
        \path[->] (A4) edge  (4);
        \path[->] (B4) edge  (4);
        \path[->] (A1) edge  (1);
        \path[->] (B1) edge  (1);
        \path[->] (A3) edge  (3);
        \path[->] (B3) edge  (3);
        \path[->] (A2) edge  (2);
        \path[->] (B2) edge  (2);
        \path[->] (A6) edge  (6);
        \path[->] (B6) edge  (6);
    \end{tikzpicture}
     }
     \caption{Recursive construction of $G^{\mathcal{A}}$. $A_{i,j}$, $B_{i,j}$ and $C_{i,j}$ denote the block-partition of $\mathbf{A}$, $\mathbf{B}$ and $\mathbf{C}$.}
     \label{fig:strarec}
  \end{subfigure}
  \caption{Blue vertices represent combinations of the input values from the factor matrices $\mathbf{A}$ and $\mathbf{B}$  used as input values for the seven sub-problems; red vertices represent the output of the seven sub-problems which are used to compute the values of the output matrix $\mathbf{C}$.}
\end{figure}
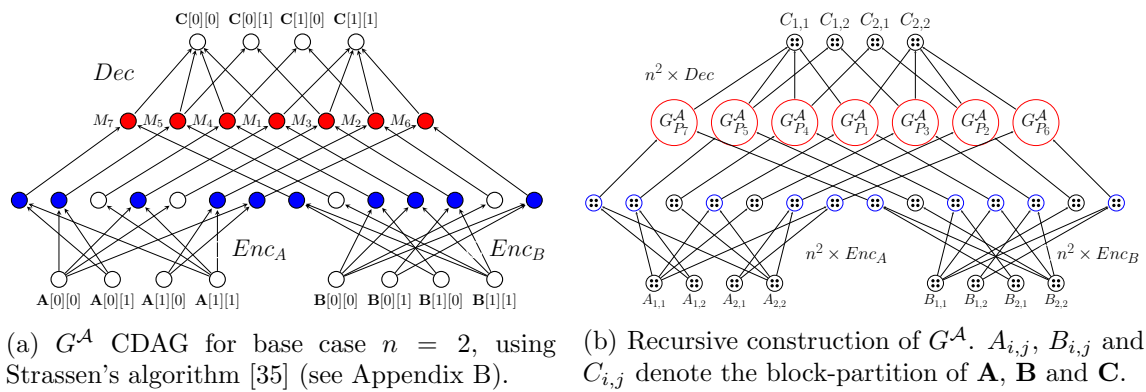 
\paragraph{Properties of $G^{\mathcal{A}}$:}
While the actual internal structure $G^{\mathcal{A}}$, and, in particular, the structure of encoder and decoder sub-CDAGs depends on the specific Strassen-like algorithm being used by $\mathcal{A}$, all versions share some properties of great importance.
Let $G(X,Y,E)$ denote an encoder CDAG for a fast multiplication algorithm $2\times 2$ base case, with $X$ (resp., $Y$) denoting the set of input (resp., output) vertices, and $E$ denoting the set of edges directed from $X$ to $Y$.

\begin{lemma}[{Lemma 3.3~\cite{ipdps2019}}]\label{lem:distinctencorder}
Let $G =(X, Y, E)$ denote an encoder graph for a fast matrix multiplication algorithm with base case $2\times 2$. There are no two vertices in Y with identical neighbors sets.
\end{lemma}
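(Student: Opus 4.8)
The plan is to argue by contradiction, after first recasting the combinatorial statement in the language of bilinear algorithms. Since $G=(X,Y,E)$ is the encoder of a fast algorithm with base case $2\times 2$, the four vertices of $X$ correspond to the entries $a_1,a_2,a_3,a_4$ of one $2\times 2$ factor, and the seven vertices of $Y$ correspond to the seven linear forms $u_1,\dots,u_7$, with $u_r=\sum_{p=1}^{4}\alpha_{r,p}\,a_p$, that are fed into the seven recursive multiplications. Under this correspondence the neighbor set $N(y_r)$ is exactly the support $\{p:\alpha_{r,p}\neq 0\}$ of $u_r$, so the claim is equivalent to saying that the seven coefficient vectors $\alpha_{1,\cdot},\dots,\alpha_{7,\cdot}\in\mathcal{R}^{4}$ have pairwise distinct supports. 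I would assume, toward a contradiction, that two outputs $y_i,y_j$ satisfy $N(y_i)=N(y_j)=S$, and then seek to contradict the correctness of the algorithm.

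The engine of the argument is correctness together with the optimality of the number of products. Because the algorithm computes $2\times 2$ multiplication and $7$ is the minimum possible number of multiplications for base case $2\times 2$ (Winograd's bound~\cite{winograd1971multiplication}), the seven rank-one bilinear forms $u_r\otimes v_r$ (with $v_r$ the corresponding $B$-forms and $\gamma_r$ the decoder coefficients) must combine, through the decoder, to the four bilinear forms of matrix multiplication; equivalently, the triples $(\alpha_r,\beta_r,\gamma_r)$ constitute a rank-$7$ decomposition of the $\langle 2,2,2\rangle$ tensor $T$, which is \emph{concise} (its three flattenings have full rank $4$). I would exploit conciseness in the $A$-mode, which forces $\{\alpha_{r,\cdot}\}$ to span $\mathcal{R}^{4}$, and then attempt to show that two equal supports would let one collapse the two corresponding products into the decoder without increasing the rank, yielding a rank-$6$ realization of $T$ and contradicting optimality. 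Concretely, I would examine the four $A$-slices $T_{p,\cdot,\cdot}$ of the tensor (each a fixed rank-$2$ matrix in the $B$- and $C$-indices) and track how the pair $u_i,u_j$ supported on the same set $S$ is forced to interact with them.

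The step I expect to be the main obstacle is precisely that equal support does \emph{not} imply that $u_i$ and $u_j$ are proportional: two linearly independent forms, such as $a_3+a_4$ and $2a_3+a_4$, share the support $\{a_3,a_4\}$. Hence a naive rank or linear-dependence argument is insufficient, and the proof must exploit the fine combinatorial structure of the \emph{admissible} rank-$7$ encoders rather than generic rank alone. My intended route around this is to invoke the classification of optimal $2\times 2$ algorithms (De Groote): every rank-$7$ decomposition of $\langle 2,2,2\rangle$ is obtained from Strassen's by the symmetry group acting on the three tensor factors and by permuting the seven products. I would verify the distinct-support property directly for Strassen's (and Winograd's) encoder by inspection, and then argue that it is inherited by the representatives relevant to the class, paying careful attention to which symmetries the class is actually closed under. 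This last point is the delicate part of the analysis, since the support-preserving symmetries (relabelings and scalings) clearly keep the supports distinct, whereas the ``shearing'' symmetries are exactly those that can merge two supports; identifying that such degenerate encoders do not arise for the algorithms under consideration is where the real work lies. The $B$-encoder case is identical after exchanging the roles of the two factors.
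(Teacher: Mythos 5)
The paper contains no proof of this lemma: it imports the statement verbatim from~\cite{ipdps2019} and remarks only that the proof there rests on an argument of Hopcroft and Kerr~\cite{hopcroft1971minimizing}. So your proposal is not being measured against an inline argument; judged on its own terms, however, it has a genuine gap --- one you candidly flag yourself. Your first reduction is fine (neighbor sets of $Y$-vertices are exactly the supports of the seven linear forms $u_r$), and your observation that equal support does not imply proportionality correctly kills the naive rank-collapsing argument. But the repair you propose --- invoke De Groote's classification, check Strassen/Winograd by inspection, and then show the property is ``inherited'' across the orbit --- defers precisely the step that constitutes the entire content of the lemma. Worse, that step is \emph{false} over the unrestricted orbit: the isotropy group of the $\langle 2,2,2\rangle$ tensor contains sandwichings $A \mapsto PAQ$ (compensated by $B \mapsto Q^{-1}BR$ and a linear correction in the decoder), and these do merge supports. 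Concretely, take $P=Q=\left(\begin{smallmatrix}1&1\\1&2\end{smallmatrix}\right)$, which is unimodular, so all coefficients stay integral and the resulting scheme is still a legitimate $7$-multiplication bilinear algorithm usable recursively over a noncommutative ring. Strassen's forms $u_1$ and $u_3$ become $2A_{11}+3A_{12}+3A_{21}+5A_{22}$ and $A_{11}+A_{12}+A_{21}+A_{22}$: two encoder outputs with identical (full) neighbor sets.

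The consequence is that the lemma can only hold under whatever structural restrictions the class of encoders in~\cite{ipdps2019} actually imposes (e.g., restrictions on the admissible coefficients, under which an argument in the spirit of Hopcroft--Kerr's analysis --- where two same-support $\pm 1$ forms would coincide after reduction modulo $2$ and contradict the pairwise independence of the seven left factors --- can go through). Your plan neither identifies these restrictions nor uses them; as written, it quantifies over all rank-$7$ decompositions, for which the statement fails by the example above. Two smaller issues compound this: De Groote's uniqueness theorem is a statement over fields, while the paper works over an arbitrary ring $\mathcal{R}$; and your by-inspection verification covers only the Strassen and Winograd representatives, i.e., only the part of the orbit reached by support-preserving symmetries, which is exactly the part where nothing needs to be proved. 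So the proposal is a reasonable research plan with an accurate self-diagnosis, but it does not establish the lemma, and the missing step is not routine --- it is the theorem.
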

While the correctness of this Lemma can be simply verified by inspection in the case of Strassen's algorithm~\cite{strassen1969gaussian}, Lemma~\ref{lem:distinctencorder} generalizes the statement to \emph{all} encoders corresponding to fast matrix multiplication algorithms with base case $2\times 2$. From Lemma~\ref{lem:distinctencorder} we have:
\begin{lemma}\label{lem:distinct}
Let $\mathcal{A}\in \hmm{}$ and  let $P_1$ and $P_2$ be any two sub-problems generated by $\mathcal{A}$  with input size greater than $\nz{}\times \nz{}$, such that $P_2$ is not recursively generated while computing $P_1$ and vice versa. Then, the sub-CDAGs of $G^{\mathcal{A}}$ corresponding, respectively, to $P_1$ and to $P_2$ are vertex disjoint.
\end{lemma}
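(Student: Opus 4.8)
The plan is to argue by locating the least common ancestor of $P_1$ and $P_2$ in the recursion tree of $\mathcal{A}$ and reducing the whole statement to a single clean fact: the seven sub-CDAGs generated by one fast computation are pairwise vertex disjoint. Since $P_1$ and $P_2$ are both sub-problems generated by $\mathcal{A}$, they are nodes of the recursion tree $T$ whose root is the top problem $P$, whose internal nodes are exactly the sub-problems computed by a fast Strassen-like scheme (each with seven children), and whose leaves are the sub-problems computed by a standard algorithm. The hypothesis that neither of $P_1,P_2$ is recursively generated while computing the other says precisely that they are incomparable in $T$. I would then let $P^*$ be their least common ancestor: because $P_1$ and $P_2$ lie in two distinct subtrees hanging from $P^*$, the node $P^*$ branches and is therefore computed using a fast algorithm, generating seven sub-problems $P^*_1,\ldots,P^*_7$. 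Say $P_1$ lies in the subtree rooted at $P^*_a$ and $P_2$ in the subtree rooted at $P^*_b$ with $a\neq b$. By the recursive construction of Section~\ref{sec:CDAG}, the sub-CDAG of any descendant is contained in the sub-CDAG of its ancestor, so the vertices of $G^{\mathcal{A}}_{P_1}$ are among those of $G^{\mathcal{A}}_{P^*_a}$ and the vertices of $G^{\mathcal{A}}_{P_2}$ are among those of $G^{\mathcal{A}}_{P^*_b}$. It therefore suffices to prove that $G^{\mathcal{A}}_{P^*_a}$ and $G^{\mathcal{A}}_{P^*_b}$, two siblings generated by the same fast computation, are vertex disjoint.

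To establish sibling disjointness I would partition the vertices of each $G^{\mathcal{A}}_{P^*_i}$ into (i) its input vertices and (ii) its internal and output vertices, and treat the two parts separately. By the template of Section~\ref{sec:CDAG}, the internal and output vertices of the seven sub-CDAGs are introduced as seven \emph{disjoint} copies, so the type-(ii) sets of $P^*_a$ and $P^*_b$ are trivially disjoint, and no type-(ii) vertex of one sibling can coincide with any vertex of another (the latter carry their defining in-edges from within their own sub-CDAG). The only vertices that a sub-CDAG shares with the parent $P^*$ are the encoder outputs, which serve as its input vertices. Each copy of the encoder $Enc_A$ (resp.\ $Enc_B$) has exactly seven output vertices, one feeding each sub-problem $P^*_i$; by Lemma~\ref{lem:distinctencorder} no two of these seven outputs have identical neighbor sets, hence they are genuinely distinct vertices and cannot be identified. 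Consequently the input vertices feeding $P^*_a$ are disjoint from those feeding $P^*_b$. Combining the two parts gives that $G^{\mathcal{A}}_{P^*_a}$ and $G^{\mathcal{A}}_{P^*_b}$ share no vertex, which together with the first paragraph proves the lemma.

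The delicate step, and the one I expect to require the most care, is exactly the disjointness of the \emph{input} vertices of two siblings: a priori the seven linear combinations produced by an encoder could collapse to fewer vertices, in which case a single encoder output would belong to two sibling sub-CDAGs and disjointness would fail. Lemma~\ref{lem:distinctencorder} is precisely the combinatorial fact that rules this out for every fast algorithm with $2\times 2$ base case, and so it is the crux on which the argument rests; everything else (nestedness of descendant sub-CDAGs and disjointness of the freshly introduced internal/output copies) is immediate from the construction. Finally, the hypothesis that $P_1$ and $P_2$ have size larger than $\nz{}\times\nz{}$ is what guarantees that they are produced by the fast recursive scheme, so that the branching ancestor $P^*$ is indeed a fast computation whose encoders satisfy Lemma~\ref{lem:distinctencorder}; without this size condition one would have to separately discuss sub-problems absorbed into a standard (black-box) computation, for which the encoder-based composition does not apply.
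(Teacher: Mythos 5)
Your proof is correct and takes essentially the same route as the paper: the paper derives Lemma~\ref{lem:distinct} directly from Lemma~\ref{lem:distinctencorder} without spelling out details, and your least-common-ancestor reduction to sibling disjointness, with the encoder outputs kept distinct precisely by Lemma~\ref{lem:distinctencorder}, is the natural formalization of that implication. You also correctly identify the crux (that the seven encoder outputs cannot collapse into a shared vertex) and the role of the construction's fresh disjoint copies for internal/output vertices, so nothing is missing.
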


The following lemma, originally introduced for Strassen's algorithm in~\cite{bilardi2017complexity} and then generalized for Strassen-like algorithms with base case $2\times 2$ in in~\cite{ipdps2019}, captures a connectivity property of encoder sub-CDAGs. 
\begin{lemma}[{Lemma 3.1~\cite{ipdps2019}}]\label{lem:conneconder}
Given an encoder CDAG for any Strassen-like algorithm with base case $2\times 2$, for any subset $Y$ of its output vertices, there exists a subset $X$ of its input vertices, with $\min\{|Y|, 1 + \lceil\left(|Y|-1\right)/2\rceil \}\leq |X|\leq |Y|$, such that there exist $|X|$ vertex-disjoint paths connecting the vertices in $X$ to vertices in $Y$.
\end{lemma}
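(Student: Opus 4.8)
The plan is to reduce the statement to a bipartite maximum-matching bound and then to control that matching through the defect (König) form of Hall's theorem, exploiting the distinctness of neighbor sets guaranteed by Lemma~\ref{lem:distinctencorder}. Since in the encoder $G=(X,Y,E)$ every edge is directed from an input vertex to an output vertex, a family of vertex-disjoint paths joining a set of inputs to a set of outputs is exactly a matching of this bipartite graph, each path being a single edge. I would therefore denote the four input vertices by $x_1,\dots,x_4$, fix the prescribed output subset of size $k$, and take a matching $\mathcal{M}$ saturating as many inputs as possible using only partners in that subset; letting the saturated inputs form the set $X$ required by the statement, its cardinality is $|X|=|\mathcal{M}|\le k$ automatically, and the matching edges supply the required vertex-disjoint paths. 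The whole claim thus reduces to showing $|\mathcal{M}|\ge 1+\lceil (k-1)/2\rceil$, which, since $1+\lceil (k-1)/2\rceil\le k$ for every $k\ge 1$, coincides with the stated minimum.

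To bound $|\mathcal{M}|$ from below I would invoke König's theorem. For a set of inputs $A\subseteq\{x_1,\dots,x_4\}$, let $t(A)$ be the number of chosen outputs whose entire neighbor set is contained in $A$. Then the maximum matching into the chosen subset satisfies
\[
|\mathcal{M}| \;=\; k-\max_{A\subseteq\{x_1,\dots,x_4\}}\bigl(t(A)-|A|\bigr),
\]
so it suffices to prove $\max_A\bigl(t(A)-|A|\bigr)\le\lfloor (k-1)/2\rfloor$: no set of inputs may ``trap'' too many of the chosen outputs relative to its own size. I would establish this by a short case analysis on $a:=|A|\in\{0,1,2,3,4\}$.

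Two facts drive the case analysis. First, distinctness (Lemma~\ref{lem:distinctencorder}) gives $t(A)\le 2^{a}-1$, since the trapped outputs carry distinct non-empty neighbor sets inside $A$; this already settles $a\le 2$. Second, each of the four inputs is adjacent to at least two of the seven outputs. This minimum-degree property follows from the correctness of a $2\times 2$ bilinear scheme: were an entry of $\mathbf{A}$ to occur in a single encoder output $L_m$, its influence on the result would pass only through the rank-one product $L_m(\mathbf{A})R_m(\mathbf{B})$ and hence be proportional, in every output entry, to the single form $R_m(\mathbf{B})$, whereas that entry must contribute two linearly independent $\mathbf{B}$-dependencies to the two entries of $\mathbf{C}$ it affects. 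Using this property, when $a=3$ the unique input outside $A$ already removes at least two outputs from the trapped ones, so $t(A)\le 5$, and for $a=4$ one has the trivial $t(A)\le k$. Substituting these bounds yields $\max_A\bigl(t(A)-|A|\bigr)\le\lfloor (k-1)/2\rfloor$ for every $1\le k\le 7$, which is exactly what is needed.

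The main obstacle is precisely the second ingredient: distinctness alone does \emph{not} suffice. For instance, the seven neighbor sets $\{x_1\},\{x_2\},\{x_3\},\{x_4\},\{x_1,x_2\},\{x_1,x_3\},\{x_2,x_3\}$ are distinct and non-empty, yet selecting the six that avoid $x_4$ traps all of them inside $\{x_1,x_2,x_3\}$ and leaves a maximum matching of size $3<4=1+\lceil 5/2\rceil$. Such a configuration is ruled out only because it cannot arise from a correct multiplication scheme, as $x_4$ would have degree one. Hence the crux is to combine the purely combinatorial distinctness of Lemma~\ref{lem:distinctencorder} with the algebraic minimum-degree consequence of correctness; once both are available the König computation is entirely routine, and the tightness of the resulting bound for each value of $k$ is what explains the exact form $1+\lceil(|Y|-1)/2\rceil$ appearing in the statement.
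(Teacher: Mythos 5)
Your proof is correct, and it helps to note that this paper does not actually prove Lemma~\ref{lem:conneconder} at all: it imports it verbatim from~\cite{ipdps2019} (Lemma 3.1), remarking only that the proof rests on an argument of Hopcroft and Kerr~\cite{hopcroft1971minimizing}. Measured against that, your reconstruction relies on the same two structural facts that the cited literature extracts from correctness of a $2\times 2$ bilinear scheme --- no two of the seven encoder outputs have identical neighbor sets (Lemma~\ref{lem:distinctencorder}), and every input vertex has degree at least two --- but organizes the counting differently: you observe that in the bipartite encoder every path is a single edge, so the claim is exactly a maximum-matching bound, and you obtain it from the defect (K\"onig/Hall) formulation by showing $\max_{A}\bigl(t(A)-|A|\bigr)\le\lfloor(|Y|-1)/2\rfloor$ via cases on $|A|\in\{0,1,2,3,4\}$. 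The arithmetic checks out: with $t(A)\le 2^{|A|}-1$ for $|A|\le 2$, $t(A)\le\min\{5,|Y|\}$ for $|A|=3$ (from degree two of the one excluded input), and $t(A)\le |Y|$ for $|A|=4$, the deficiency bound holds for every $1\le |Y|\le 7$, and $|Y|-\lfloor(|Y|-1)/2\rfloor = 1+\lceil(|Y|-1)/2\rceil$ yields exactly the stated quantity. Your counterexample demonstrating that distinctness of neighbor sets alone is insufficient is the most valuable part of the write-up: it isolates precisely where correctness of the multiplication scheme must enter, a point the paper leaves implicit behind the citation. Two details deserve to be made explicit. First, you tacitly use that every output has a non-empty neighbor set (needed both for $t(A)\le 2^{|A|}-1$ and to rule out isolated chosen outputs); this follows because a zero encoding form would give a $2\times 2$ scheme with six multiplications, contradicting the optimality of seven~\cite{winograd1971multiplication}. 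Second, your minimum-degree argument compares the coefficients of the formal variables $\mathbf{B}[k][0]$ and $\mathbf{B}[k][1]$ against scalar multiples of the single form $R_m(\mathbf{B})$; this is exactly the Hopcroft--Kerr reasoning and should be phrased over the ring $\ri{}$ with the same care the cited proofs take, but it is sound.
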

The proofs of Lemma~\ref{lem:distinctencorder} and Lemma~\ref{lem:conneconder} are based on an argument originally presented by Hopcroft and Kerr~\cite{hopcroft1971minimizing}. We refer the reader to~\cite{ipdps2019} for the proofs.
\section{Maximal sub-problems and their properties}\label{sec:msp}
For an algorithm $\mathcal{A}\in \hmm{}$, let $P'$ denote a sub-problem generated by $\mathcal{A}$. In our presentation we consider the entire matrix multiplication problem an \emph{improper} sup-problem generated by $\mathcal{A}$. We refer as the \emph{ancestor sub-problems} of $P'$ as the sequence of sub-problems $P'_0, P'_2,\ldots,P'_i$ generated by $\mathcal{A}$ such that $P'_{j+1}$ was recursively generated to compute $P'_j$ for $j=0,1,\ldots,i-1$, and such that $P$ was recursively generated to compute $P'_i$. Clearly, if $P'$ is the entire problem, then $P'$ has no ancestors.

Towards studying the \io{} complexity of algorithms in $\hmm{}$ we focus on the analysis of a particular set of sub-problems.


\begin{definition}[Maximal Sub-Problems (MSP)]\label{def:maxsubp}
Let $\mathcal{A} \in \hmm{}$ be an algorithm used to multiply matrices $\mathbf{A},\mathbf{B} \in \ri{}^{n\times n}$. If $n\leq 2\sqrt{M}$ we say that $\mathcal{A}$ does not generate any Maximal Sub-Problem (MSP). 
\item Let $P_i$ be a sub-problem generated by $\mathcal{A}$ with input size $n_i\times n_i$, with $n_i\geq 2M$, and such that all its ancestors sub-problems are computed, according to $\mathcal{A}$ using algorithms from the fast class. We say that:
\begin{itemize}
    \item $P_i$ is a Type 1 MSP of $\mathcal{A}$ if, according to $\mathcal{A}$, is computed using an algorithm from the standard class. If the entire problem is to be solved using an algorithm for the standard class, we say that the entire problem is the unique (improper) Type 1 MSP generated by $\mathcal{A}$.
    \item $P_i$ is a Type 2 MSP of $\mathcal{A}$ if, according to $\mathcal{A}$, is computed by  generating 7 sub-problems  according to the recursive scheme corresponding to an algorithm from the fast (Strassen-like) class, and if the generated sub-problems have input size strictly smaller than $2\sqrt{M}\times 2\sqrt{M}$. If the entire problem uses a recursive algorithm from the fast class to generate 7 sub-problems with input size smaller than $2\sqrt{M}\times 2\sqrt{M}$, we say that the entire problem is the unique, improper, Type 2 MSP generated by $\mathcal{A}$.
\end{itemize}
\end{definition}
In the following we denote as $\nu_1$ (resp., $\nu_2$) the number of Type 1 (resp., Type 2) MSPs generated by $\mathcal{A}$. 
Let $P_i$ denote the $i$-th MSP generated by $\mathcal{A}$ and let $G^{\mathcal{A}}_{P_i}$ denote the corresponding sub-CDAG of $G^{\mathcal{A}}$. We denote as $\mathbf{A}_i$ and $\mathbf{B}_i$ (resp., $\mathbf{C}_i$) the input factor matrices (resp., the output product matrix) of $P_i$.
\paragraph{Properties of MSPs and their corresponding sub-CDAGs:}
 By Definition~\ref{def:maxsubp}, we have that for each pair of distinct MMSPs $P_1$ and $P_2$, $P_2$ is not recursively generated by $\mathcal{A}$ in order to compute $P_1$ or vice versa. Hence, by Lemma~\ref{lem:distinct}, the sub-CDAGs of $G^\mathcal{A}$ that correspond each to one of the MSPs generated by $\mathcal{A}$ are vertex, disjoint.

In order to obtain our \io{} lower bound for algorithms in $\hmm{}$, we characterize properties regarding the minimum dominator size of an arbitrary subset of $\subpInput{}$ and $\subpOutput{}$.
\begin{lemma}[Proof in Appendix~\ref{app:domtype2}]\label{lem:domtype2}
Let $G^\mathcal{A}$ be the CDAG corresponding to an algorithm $\mathcal{A}\in\hmm{}$ which admits $n_1$ Type 1 MSPs. For each Type 1 MSP $P_i$ let $\subpInput{}_i$ denote the set of input vertices of the associated sub-CDAG $G^{\mathcal{A}}_{P_i}$ which correspond each to an entry of the input matrices $\mathbf{A}_i$ and $\mathbf{B}_i$. Further, we define $\subpInput{} = \cup_{i=1}^{\nu_1} \subpInput{}_i$.

Let  $Y\subseteq \subpInput{}$ in $G^\mathcal{A}$ such that $|Y \cap \subpInput{}_i|= a_i/\sqrt{b_i}$, with $a_i,b_i \in \mathbb{N}$,  $a_i \geq  b_i$ for $i=1,2,\ldots,\nu_1$, and such that $b_i=0$ if and only if $a_i=0$.\footnote{Here we use as convention that $0/0=0$.} Any dominator set $\dom{}$ of $Y$ satisfies $|\dom{}|\geq \min \{2M, \sum_{i=1}^{\nu_1} a_i / \sqrt{\sum_{i=1}^{\nu_1} b_i}\}$.
\end{lemma}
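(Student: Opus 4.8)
The plan is to reduce the stated estimate to the cleaner inequality $|\dom{}|\ge\min\{2M,|Y|\}$ and then recover the $\left(\sum_i a_i\right)/\sqrt{\sum_i b_i}$ form by a short convexity argument. Since $b_i\le\sum_j b_j$ for every $i$, we have $a_i/\sqrt{b_i}\ge a_i/\sqrt{\sum_j b_j}$ (using the convention $0/0=0$ together with $b_i=0\iff a_i=0$); summing over $i$ gives $|Y|=\sum_i a_i/\sqrt{b_i}\ge\left(\sum_i a_i\right)/\sqrt{\sum_i b_i}$, so that $\min\{2M,|Y|\}\ge\min\{2M,\left(\sum_i a_i\right)/\sqrt{\sum_i b_i}\}$. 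Hence it suffices to prove $|\dom{}|\ge\min\{2M,|Y|\}$ for an arbitrary $Y\subseteq\subpInput{}$. Taking $\dom{}=Y$ shows this is tight up to the $2M$ cap: the real content is that $Y$ cannot be dominated using fewer than $|Y|$ of its proper ancestors in $G^{\mathcal{A}}$.

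For the core inequality I would exploit the duality between dominator sets and vertex-disjoint paths. A dominator of $Y$ with respect to the global input set $I$ is exactly a vertex set meeting every $I$-to-$Y$ path, so by Menger's theorem its size is at least the maximum number of pairwise vertex-disjoint paths from $I$ to $Y$ in $G^{\mathcal{A}}$, each of which must contain a distinct dominator vertex. The task therefore becomes the explicit construction of $\min\{2M,|Y|\}$ such paths.

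These paths are obtained by descending, level by level, the recursive encoder cascade that produces the vertices of $\subpInput{}$. Each $y\in Y$ sits at an output of the encoder copy invoked at the recursion level of its Type~1 MSP; Lemma~\ref{lem:conneconder} supplies, for any set of encoder outputs, a comparably large set of encoder inputs joined to it by vertex-disjoint paths, and I would apply it at each level and splice the resulting segments into paths reaching $I$. The vertex-disjointness of the sub-CDAGs of distinct MSPs, which follows from Lemma~\ref{lem:distinct} as noted before the statement, keeps the segments lying inside distinct MSP regions from interfering.

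The main obstacle is controlling the connectivity loss at each level. Lemma~\ref{lem:conneconder} is lossless---one input per output---only as long as at most two targets fall in any single encoder copy, but degrades to a factor of roughly $1/2$ once they all cluster in one copy. The crux is thus to route the descending paths so that at every level they stay spread across distinct encoder copies; since each cascade level feeding a Type~1 MSP offers $\Omega(M)$ distinct copies, this spreading can be sustained for up to $\Theta(M)$ paths, which is precisely what produces the $2M$ term of the minimum. A second delicate point is that the cascades of distinct MSPs coincide above the level at which those MSPs separate, so the paths heading to different $Y\cap\subpInput{}_i$ must also be kept mutually vertex-disjoint throughout the shared upper part of $G^{\mathcal{A}}$; organizing this global routing, rather than any single estimate, is the technical heart of the argument.
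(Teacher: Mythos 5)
Your opening reduction is where the argument breaks: the ``cleaner'' inequality $|\dom{}|\geq\min\{2M,|Y|\}$ that you reduce to is false in $G^{\mathcal{A}}$, and the lemma's aggregate form $\left(\sum_i a_i\right)/\sqrt{\sum_i b_i}$ is not slack to be rounded up to $|Y|$ --- it is exactly the discount needed to survive the encoder bottleneck. Concretely, take an algorithm whose single fast recursive step at the top generates seven sub-problems, all computed by standard algorithms of admissible size, so that $\nu_1=7$, and let $Y$ be the seven output vertices of one single encoder copy $Enc_A$; these lie one in each $\subpInput{}_i$, so $a_i/\sqrt{b_i}=1$ for every $i$. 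The four input vertices of that encoder copy are the in-neighbors of all seven vertices of $Y$ and hence form a dominator of size $4<7=|Y|$, while $2M$ may be arbitrarily large. Your claimed core inequality would force $|\dom{}|\geq 7$; the lemma only promises $\min\{2M,7/\sqrt{7}\}=\sqrt{7}\leq 4$, which is consistent. This also defeats the repair sketched in your last paragraph: the placement of $Y$ on encoder outputs is fixed by the adversary's choice of $Y$, not by your routing, so when $Y$ clusters on few encoder copies no routing scheme can produce more vertex-disjoint paths than the cut at those copies' inputs allows --- the loss quantified in Lemma~\ref{lem:conneconder} (at best $1+\lceil(|\mathbf{y}_l|-1)/2\rceil$ paths per copy) is unavoidable, not an artifact of careless spreading. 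Relatedly, the $2M$ cap does not come from having $\Omega(M)$ encoder copies per level.

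What is missing is the mechanism that makes the halved path count still beat the aggregate ratio. The paper proceeds by induction on $\nu_1$: it applies the inductive hypothesis inside each of the seven pairwise vertex-disjoint sub-CDAGs (Lemma~\ref{lem:distinct}), extends the resulting path families through the encoder layer --- where Lemma~\ref{lem:conneconder} preserves the largest family in full and at least half of each of the other six, since no two outputs of one encoder copy feed the same sub-CDAG --- and then invokes a dedicated superadditivity estimate (Lemma~\ref{lem:lastminute} in the appendix): if $a_1/\sqrt{b_1}\geq a_2/\sqrt{b_2}\geq\cdots\geq a_7/\sqrt{b_7}$ then $\frac{a_1}{\sqrt{b_1}}+\frac{1}{2}\sum_{j=2}^{7}\frac{a_j}{\sqrt{b_j}}\geq\left(\sum_{j=1}^{7}a_j\right)/\sqrt{\sum_{j=1}^{7}b_j}$. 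That inequality is the technical heart of the proof and is precisely what your Menger-duality framing (which is otherwise the right high-level picture, and is indeed how the paper converts disjoint paths into dominator lower bounds) lacks; without it, recombining the per-sub-problem bounds across the seven branches does not close the induction.
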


\begin{lemma}[Proof in Appendix~\ref{app:domtype1}]\label{lem:domtype1}
Let $G^\mathcal{A}$ be the CDAG corresponding to an algorithm $\mathcal{A}\in\hmm{}$ which admits $n_2$ Type 2 MSPs. Further let $\mathcal{Z}$ denote the set of vertices corresponding to the entries of the output matrices of the $n_2$ Type 2 MSPs.
Given any subset $Z\subseteq \subpOutput{}$ in $G^\mathcal{A}$ with $|Z|\leq 4M$, any dominator set $\dom$ of $Z$ satisfies $|\dom|\geq|Z|/2$.
\end{lemma}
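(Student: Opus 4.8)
The plan is to bound the dominator from below by the maximum number of vertex-disjoint paths from the global input set $I$ to $Z$. A dominator set of $Z$ with respect to $I$ is exactly a vertex set meeting every $I$--$Z$ path, so by Menger's theorem its minimum size equals the maximum number of vertex-disjoint $I$--$Z$ paths; it therefore suffices to exhibit at least $|Z|/2$ such paths. First I would group the vertices of $Z$ by the Type~2 MSP they belong to (the corresponding sub-CDAGs are pairwise vertex-disjoint by Lemma~\ref{lem:distinct}, since distinct MSPs are never generated one from the other), and then, within each MSP, by the decoder gadget $Dec$ that produces them: each decoder copy sits at one position of the $(n_i/2)\times(n_i/2)$ grid, has the seven sub-product entries at that position as its only inputs and the four corresponding entries of $\mathbf{C}_i$ as its outputs. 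Writing $t_c=|Z_c|$ for the number of selected outputs in copy $c$, we have $\sum_c t_c=|Z|$.

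Next I would produce the paths locally. The decoder of any Strassen-like $2\times2$ scheme enjoys the same Hopcroft--Kerr structural property exploited in Lemma~\ref{lem:distinctencorder} and Lemma~\ref{lem:conneconder}: no two of its outputs have identical sets of sub-product neighbours, so for any subset of its outputs there are at least $\lceil t_c/2\rceil$ vertex-disjoint paths reaching them from distinct sub-product output vertices. This is precisely the step that introduces the factor $1/2$. Since different decoder copies use different grid positions (hence different sub-product vertices) and different MSPs use disjoint sub-CDAGs, the paths produced across all copies are already pairwise vertex-disjoint, and they number $\sum_c\lceil t_c/2\rceil\ge |Z|/2$. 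It then remains to extend each of these paths backwards, from the sub-product output at which it begins, all the way to a distinct global input vertex, keeping the whole collection vertex-disjoint. Each such sub-product output is an output of a sub-problem sub-CDAG $G^{\mathcal{A}}_{P_j}$; the seven sub-CDAGs of one MSP are vertex-disjoint by construction, so I would route the selected outputs of each $G^{\mathcal{A}}_{P_j}$ to distinct inputs of $P_j$ using the linkage property of matrix-multiplication CDAGs (ultimately the Grigoriev flow of the product function), and then continue upward through the encoders $Enc_A,Enc_B$ feeding $P_j$ to entries of $\mathbf{A}_i,\mathbf{B}_i$, recursively through the encoder layers of all ancestor sub-problems up to the global inputs.

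The hard part is exactly this last extension: guaranteeing that it can be carried out with globally vertex-disjoint paths and, crucially, without incurring any further factor loss, so that the final count stays at $|Z|/2$ rather than degrading by a factor $1/2$ at every recursion level. The danger is that the encoder connectivity guaranteed by Lemma~\ref{lem:conneconder} is itself only a ``half'' guarantee, which would compound over the (possibly many) ancestor levels above an MSP. I expect to neutralise this by spreading the routed paths over distinct encoder copies at each level, so that each individual encoder gadget carries only one or two selected outputs, a regime in which Lemma~\ref{lem:conneconder} is lossless, and by using the hypothesis $|Z|\le 4M$ together with the fact that each Type~2 MSP sub-product has fewer than $4M$ entries to ensure there is always enough room (enough distinct positions and distinct inputs) to do so. Once the $|Z|/2$ vertex-disjoint $I$--$Z$ paths are in hand, Menger's theorem yields $|\dom|\ge |Z|/2$ for every dominator set $\dom$ of $Z$.
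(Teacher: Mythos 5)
Your reduction via Menger's theorem is sound in principle --- the lemma is indeed equivalent to exhibiting $\lceil |Z|/2\rceil$ vertex-disjoint input-to-$Z$ paths --- but the construction you sketch cannot deliver that many paths, and it locates the factor $1/2$ in the wrong place. In the paper's proof the $1/2$ has nothing to do with the decoder: it comes from the Grigoriev flow of the matrix multiplication function (Lemma~\ref{lem:info_flo_mat_mul}, via Lemma~\ref{lem:base_connectivity}), where each dominator vertex placed \emph{inside} an MSP discounts two outputs (the $|Z|-2|Q|$ term). Accordingly, the paper never builds disjoint paths through decoders or through the interiors of the MSPs at all: in the proof of Lemma~\ref{lem:domtype1} the dominator is split into its internal part $Q$, handled purely information-theoretically by Lemma~\ref{lem:stra_part1}, and its external part $D'$, each vertex of which kills at most one of the vertex-disjoint paths, which live only in the encoder forest above the MSPs --- exactly the region whose structure is known. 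Your plan instead spends the entire factor-$2$ budget at the decoder ($\sum_c \lceil t_c/2\rceil \geq |Z|/2$ is tight, with zero slack), so every subsequent backward extension must be lossless. It is not: inside each of the seven sub-problems of a Type 2 MSP (arbitrary hybrid algorithms on inputs smaller than $2\sqrt{M}\times 2\sqrt{M}$, with no fixed CDAG structure), the only available tool is again the flow bound, which for $v$ selected sub-product outputs bounds the minimum cut only by about $v/2$ even when all inputs are usable; by Menger this guarantees only about $v/2$ disjoint paths, halving your count to $|Z|/4$ before you even reach the encoders.

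The encoder stage has the same defect, and your proposed fix does not work. Which encoder outputs must be reached is forced by where the routed paths land, not chosen by you: since a single Type 2 MSP has $n_i^2 \geq 4M$ output entries, $Z$ can sit entirely inside one MSP and be concentrated, so individual encoder copies necessarily carry many selected outputs, and Lemma~\ref{lem:conneconder} is then only a half-guarantee that compounds over the ancestor recursion levels. The paper escapes this precisely because its inductive invariant in Lemma~\ref{lem:stra_part1} has square-root form $c\sqrt{M\delta}$: ordering the seven sub-problems by decreasing $\delta^{(j)}$, taking the largest in full and the rest at half, one gets $\sqrt{\delta^{(1)}}+\frac{1}{2}\sum_{j\geq 2}\sqrt{\delta^{(j)}}\geq \sqrt{\sum_{j}\delta^{(j)}}$, so the per-level halving is absorbed by concavity and nothing compounds; a linear invariant such as your count of $|Z|/2$ paths has no such absorption, and no amount of spreading restores it. (A further, smaller gap: the decoder analogue of the Hopcroft--Kerr property you invoke is plausible but is neither stated nor proved in the paper --- Lemmas~\ref{lem:distinctencorder} and~\ref{lem:conneconder} concern encoders --- so it would require its own proof.) To repair your argument you would essentially have to abandon the all-the-way disjoint-path packing and reinstate the paper's split: an information-flow bound inside the MSP sub-CDAGs, and vertex-disjoint paths with a square-root-shaped count outside them.
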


For each Type 1 MSP $P_i$ generated by $\mathcal{A}$, with input size\footnote{In general, different Type 1 MSP may have different input sizes} $n_i\times n_i$, we denote as $T_i$ the set of variables whose value correspond to the $n_i^3$ elementary products  $\mathbf{A}_i[j][k]\mathbf{B}_i[k][j]$ for $j,k = 0,1,\ldots,n_i-1$. Further, we denote as $\mathcal{T}_i$ the set of vertices corresponding to the variables in $T_i$, and we define $\mathcal{T}= \cup_{i=1}^{\nu_1} \mathcal{T}_i$.
\begin{lemma}[Proof in Appendix~\ref{app:domtype3}]\label{lem:domtype3}
For any Type 1 MSPs generated by $\mathcal{A}$ consider $T'_i\subseteq T_i$.
Let $\subpInput{}_i^{(\mathbf{A})}\subseteq \subpInput{}_i$ (resp., $\subpInput{}_i^{(\mathbf{B})}\subseteq \subpInput{}_i$) denote a subset of the vertices corresponding to entries of $\mathbf{A}_i$ (resp., $\mathbf{B}_i$) which are multiplied in at least one of the elementary products in $T'_i$. Then any dominator $D$ of the vertices corresponding to $T'_i$ with respect to the the vertices in $\subpInput{}_i$ is such that
$$
    |D|\geq \max \{ |\subpInput{}_i'\cap A_i|,|\subpInput{}_i'\cap B_i|\}. 
$$
\end{lemma}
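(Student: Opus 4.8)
The plan is to exhibit, for each of the two factor matrices, a family of vertex-disjoint directed paths running from the relevant inputs into $T'_i$, and then to invoke the defining ``cut'' property of a dominator: since every such path must contain a vertex of $D$ and the paths are pairwise vertex-disjoint, the number of paths is a lower bound on $|D|$ (this is the easy direction of Menger's theorem). The whole argument rests on a single structural feature of the standard class. By definition each elementary product $\mathbf{A}_i[j][z]\mathbf{B}_i[z][k]\in T_i$ is \emph{directly computed}, so the corresponding vertex of $G^{\mathcal{A}}_{P_i}$ has exactly two in-neighbours, namely the input vertex of $\mathbf{A}_i[j][z]$ and the input vertex of $\mathbf{B}_i[z][k]$. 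Consequently the only directed paths from $\subpInput{}_i$ into such a product vertex are the two length-one edges originating at its two operands.

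To establish $|D|\geq |\subpInput{}_i^{(\mathbf{A})}|$, I would build one path per vertex of $\subpInput{}_i^{(\mathbf{A})}$: for each $a\in\subpInput{}_i^{(\mathbf{A})}$ choose (arbitrarily) one product $p_a\in T'_i$ in which $a$ participates, which exists precisely because $a\in\subpInput{}_i^{(\mathbf{A})}$, and take the single edge $a\to p_a$. The key observation is that these paths are pairwise vertex-disjoint: an elementary product has a \emph{unique} $\mathbf{A}_i$-operand, so distinct inputs $a\neq a'$ force $p_a\neq p_{a'}$, and an input vertex is never a product vertex; hence all $2|\subpInput{}_i^{(\mathbf{A})}|$ endpoints are distinct and the one-edge paths share no vertex. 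Each such path starts in $\subpInput{}_i$ and ends in $T'_i$, so $D$ must meet every one of them; vertex-disjointness forces these intersection vertices to be distinct, giving $|D|\geq |\subpInput{}_i^{(\mathbf{A})}|$. Re-running the construction with the roles of $\mathbf{A}_i$ and $\mathbf{B}_i$ exchanged yields $|D|\geq |\subpInput{}_i^{(\mathbf{B})}|$, and taking the larger of the two bounds completes the argument.

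I do not expect a serious obstacle: the entire difficulty is concentrated in the observation that ``directly computed'' pins the in-neighbourhood of each product vertex down to exactly its two operands, which is what makes the trivial one-edge paths both vertex-disjoint and available as a certificate. The only point requiring a little care is the disjointness bookkeeping, i.e.\ verifying that distinct $\mathbf{A}_i$-inputs can always be routed to distinct products and that no collision arises between the input layer and the product layer; but both facts follow immediately from the uniqueness of the $\mathbf{A}_i$-operand of an elementary product, so no appeal to the internal structure of the summation trees is needed.
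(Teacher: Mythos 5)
Your proof is correct, but it proceeds by a genuinely different route than the paper's. You build, for each $a\in\subpInput{}_i^{(\mathbf{A})}$, the single-edge path $a\to p_a$ to some product of $T_i'$ containing $a$, observe that the uniqueness of the $\mathbf{A}_i$-operand of each elementary product makes $a\mapsto p_a$ injective and hence the one-edge paths pairwise vertex-disjoint, and conclude by the cut property of dominators (the easy direction of Menger); this is sound, and the existence of the edge $a\to p_a$ is exactly what ``directly computed'' guarantees for the standard class. The paper instead argues information-theoretically, in the style of its Grigoriev-flow machinery: it fixes every entry of $\mathbf{B}_i$ touched by $T_i'$ to $1$, so that each variable $a$ corresponding to a vertex of $\subpInput{}_i^{(\mathbf{A})}$ is reproduced verbatim as the value of some product in $T_i'$; the products in $T_i'$ then assume at least $|\ri{}|^{|\subpInput{}_i^{(\mathbf{A})}|}$ distinct values, while any dominator $\dom{}$ with respect to $\subpInput{}_i^{(\mathbf{A})}$ determines those values together with the fixed inputs, capping them at $|\ri{}|^{|\dom{}|}$, whence $|\dom{}|\geq|\subpInput{}_i^{(\mathbf{A})}|$. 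Each approach buys something: yours is more elementary (no appeal to ring cardinality or value-counting) and, notably, it \emph{directly constructs} the family of vertex-disjoint paths from $\subpInput{}_i$ to $\mathcal{T}_i^{(j)}$ that the proof of Theorem~\ref{thm:genmatmul} actually invokes when citing this lemma — the paper's dominator-counting proof only yields that form implicitly, via a Menger-type conversion it leaves unstated; conversely, the paper's argument is insensitive to the local wiring of the CDAG (it would survive even if each elementary product were computed through a chain of intermediate vertices rather than by a single in-degree-two gate, since it only uses the functional dependence of outputs on inputs), whereas your argument relies on the direct edge from operand to product, which the definition of the standard class does supply here.
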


\section{\io{} lower bounds for algorithm in $\hmm{}$ and $\uhmm{}$}
\label{sec:mmlwb}
\begin{theorem} \label{thm:genmatmul}
Let $\mathcal{A}\in \hmm{}$ be an algorithm to multiply two square matrices $\mathbf{A},\mathbf{B} \in\ri^{n\times n}$.
If run on a sequential machine with  cache of size $M$ and such that up to $B$  memory words stored in consecutive memory locations can be moved from cache to slow memory  and vice versa using a single memory operation, $\mathcal{A}$'s \io{} complexity satisfies:
\begin{equation}\label{eq:hmmmain}
	IO_{\mathcal{A}}\left(n,M,B\right) \geq 
	\max \{2n^2,c|\mathcal{T}|M^{-1/2},\nu_2 M\}B^{-1}
\end{equation}
for $c=0.38988157484$, where $|\mathcal{T}|$ denotes the total number of internal elementary products computed by the Type 1 MSPs  generated  by $\mathcal{A}$ and $\nu_2$ denotes the total number of Type 2 MSPs generated by $\mathcal{A}$.

If run on $\nproc{}$ processors each equipped with a local memory of size $M < n^2$ and where for each \io{} operation it is possible to move up to $\msgsize{}$ words, $\mathcal{A}$'s \io{} complexity satisfies:
\begin{equation}\label{eq:hmmpar}
	IO_{\mathcal{A}}\left(n,M,\msgsize{},\nproc{}\right) \geq 
	\max \{c|\mathcal{T}|M^{-1/2},\nu_2 M\}\left(\nproc{}\msgsize{}\right)^{-1}
\end{equation}
\end{theorem}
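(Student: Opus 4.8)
The plan is to establish the three lower bounds appearing inside the maximum of~\eqref{eq:hmmmain} independently, since the \io{} complexity of $\mathcal{A}$ is a single quantity that must dominate each of them, so that the claimed bound follows by taking the largest. Throughout, I would work with an arbitrary parsimonious schedule and adopt the round-based dominator argument pioneered by Hong and Kung and refined in the ``\emph{G-flow}'' framework of Bilardi and De Stefani, whose key virtue is that it remains valid when recomputation is allowed. Concretely, I would partition the sequence of \io{} operations of the schedule into consecutive \emph{segments}, each containing exactly $q$ read operations (with $q$ fixed later as a constant multiple of $M$). For a segment $t$, the set $\dom_t$ consisting of the (at most $M$) vertices resident in cache at the start of the segment together with the (at most $q$) vertices read during it has size $|\dom_t|\leq M+q$, and a standard argument shows that $\dom_t$ is a dominator set, with respect to the global inputs, for the set of target vertices \emph{first} computed during segment $t$. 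A lower bound on the number of segments, and hence on the total number of \io{} operations, then follows by dividing the total number of target vertices by the maximum number of targets a single segment can complete, the latter being controlled by the dominator lemmas of Section~\ref{sec:msp}.

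For the $2n^2$ term, I would observe that each of the $2n^2$ entries of $\mathbf{A}$ and $\mathbf{B}$ is an operand of at least one multiplication and, being initially stored in slow memory, must be read into cache at least once; since one \io{} operation transfers at most $B$ words, this forces at least $2n^2/B$ operations. For the $\nu_2 M$ term I would take as targets the set $\subpOutput{}$ of output vertices of the $\nu_2$ Type 2 MSPs. By Definition~\ref{def:maxsubp} each such MSP has output size at least $4M$, and by Lemma~\ref{lem:distinct} the corresponding sub-CDAGs are pairwise vertex disjoint, so $|\subpOutput{}|\geq 4\nu_2 M$. Choosing $q=M$ so that $|\dom_t|\leq 2M$, Lemma~\ref{lem:domtype1} applied to the at most $4M$ outputs completed in a segment gives that at most $4M$ output vertices can be first computed per segment; dividing yields at least $\nu_2$ segments and thus $\BOme{\nu_2 M}$ \io{} operations, matching the stated term up to the treatment of the final partial segment.

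The heart of the argument is the $c|\mathcal{T}|M^{-1/2}$ term, whose targets are the elementary products $\mathcal{T}=\cup_{i=1}^{\nu_1}\mathcal{T}_i$ computed inside the Type 1 MSPs. Within a fixed segment and a fixed Type 1 MSP $P_i$, I would bound the number of products of $\mathcal{T}_i$ completed via the Loomis-Whitney inequality in the style of Irony et al.: viewing each product $\mathbf{A}_i[j][k]\mathbf{B}_i[k][j]$ as a lattice point $(j,k)$, the number of completed products is at most $\sqrt{\sigma_A^{(i)}\sigma_B^{(i)}\sigma_C^{(i)}}$, where $\sigma_A^{(i)},\sigma_B^{(i)}$ count the distinct entries of $\mathbf{A}_i,\mathbf{B}_i$ appearing as operands and $\sigma_C^{(i)}$ the distinct output entries touched. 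The role of Lemma~\ref{lem:domtype3} is to certify that these operand entries genuinely consume dominator budget, each contributing $\max\{\sigma_A^{(i)},\sigma_B^{(i)}\}$ vertices to any dominator of the completed products with respect to the MSP inputs $\subpInput{}_i$, while Lemma~\ref{lem:domtype2} bounds the aggregate contribution of the operand entries \emph{across all MSPs} against the single budget $|\dom_t|\leq M+q$, correctly accounting for the fact that the MSP inputs are themselves encoded from the global inputs through the encoder sub-CDAGs. Summing the Loomis-Whitney bound over the MSPs subject to this global budget constraint, applying the arithmetic-geometric-mean inequality to the per-segment product count, dividing $|\mathcal{T}|$ by the resulting per-segment maximum, and finally optimizing the free parameter $q$ yields the leading constant $c=0.38988157484$.

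Finally, the parallel bound~\eqref{eq:hmmpar} follows by the standard reduction: some processor sends or receives at least a $1/\nproc{}$ fraction of all messages, each message carries at most $\msgsize{}$ words, and the local memory of size $M$ plays the role of the cache, so the sequential per-processor counts divide by $\nproc{}\msgsize{}$ (the $2n^2$ input term is dropped, since inputs may be pre-distributed). I expect the main obstacle to be the Type 1 analysis: the delicate point is to combine the operand counting of Loomis-Whitney with the two dominator lemmas so that the encoded MSP inputs are charged against cache exactly once and simultaneously across all MSPs, all while keeping the argument valid under recomputation; extracting the tight constant $c$ from the resulting constrained optimization is the most technical step.
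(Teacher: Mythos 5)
Your scaffolding — segment decomposition with dominator sets of size at most $M+q$, Loomis--Whitney per Type 1 MSP, Lemma~\ref{lem:domtype3} feeding Lemma~\ref{lem:domtype2} on the input side, Lemma~\ref{lem:domtype1} for the Type 2 outputs, and a work-based split for the parallel case — matches the paper's proof in outline, and your $2n^2$ and $\nu_2 M$ arguments are essentially the paper's. But there is a genuine gap in the Type 1 analysis: you introduce $\sigma_C^{(i)}$, the number of distinct output entries of $\mathbf{C}_i$ touched during a segment, and then never charge it to anything. All of your budget accounting concerns the input operands (via Lemmas~\ref{lem:domtype3} and~\ref{lem:domtype2} against the budget $M+q$), and your closing sentence confirms the plan is to charge ``the encoded MSP inputs'' only. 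With just the input sides bounded, the estimate $\sqrt{\sigma_A^{(i)}\sigma_B^{(i)}\sigma_C^{(i)}}$ yields no per-segment bound at all: a schedule could touch up to $n_i^2$ output entries within a single segment, accumulating partial sums ``for free,'' and the completed-product count is then unbounded no matter how $q$ is chosen.

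The paper closes this hole with a separate accumulator argument that your proposal lacks: since $n_i\geq 2\sqrt{M}$, at most $|\mathcal{T}^{(j)}_i|/n_i$ active outputs can be \emph{entirely} computed inside a segment of $M^{3/2}$ products, and every remaining active output requires a partial accumulator that either occupies a cache slot at a segment boundary or costs a read/write during the segment; these accumulator vertices are not shared among output entries and are vertex-disjoint from the input-side dominator. This gives $|\dom{}|\geq |\mathcal{Z}'|-|\mathcal{T}^{(j)}|/(2\sqrt{M})+\min\bigl\{2M,\,|\mathcal{T}^{(j)}|/\sqrt{|\mathcal{Z}'|}\bigr\}$, and the constant $c$ arises exactly from the tension between these two terms, minimizing over the adversarial quantity $|\mathcal{Z}'|$ (at $|\mathcal{Z}'|=2^{-2/3}M$), yielding $c=2^{-2/3}+2^{1/3}-3/2$; your proposed AM--GM over MSPs followed by optimization of the free parameter $q$ has no mechanism to produce this value. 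Two smaller repairs: your segments count only \texttt{read} operations, but the accumulator charging requires writes to be counted as well; and your parallel reduction should start from a processor that \emph{computes} at least $|\mathcal{T}|/\nproc{}$ elementary products or $|\subpOutput{}|/\nproc{}$ Type 2 outputs and bound that processor's communication — asserting that some processor handles a $1/\nproc{}$ fraction of all messages is not a valid opening move, since the total message count is precisely the unknown being lower-bounded.
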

\begin{proof}
We prove the result in~\eqref{eq:hmmmain} (resp.,~\eqref{eq:hmmpar}) for the case $B=1$ (resp., $\msgsize{}=1$). The result then trivially generalizes for a generic $B$ (resp., $\msgsize{}$). We first prove the result for the sequential case in in~\eqref{eq:hmmmain}. The bound for the parallel case in~\eqref{eq:hmmpar} will be obtained as a simple generalization. 

The fact that $IO_{\alg}(n,M,1)\geq 2n^2$ follows trivially from the fact that as in our model the input matrices $\mathbf{A}$ and $\mathbf{B}$ are initially stored in slow memory, it will necessary to move the entire input to the cache at least once using at least $2n^2$ \io{} operations.  If $\alg{}$ does not generate any MSPs the statement in~\eqref{eq:hmmmain} is trivially verified. In the following, we assume $\nu_1+\nu_2\geq 1$. 

Let $G^{\mathcal{A}}$ denote the CDAG associated with algorithm $\mathcal{A}$ according to the construction in Section~\ref{sec:CDAG}. 
By definition, and from Lemma~\ref{lem:distinct}, the $\nu_1+\nu_2$ sub-CDAGs of $G^{\mathcal{A}}$ corresponding each to one of the MSPs generated by $\mathcal{A}$  are vertex-disjoint. Hence,  the $\mathcal{T}_i$'s are a partition of $\mathcal{T}$ and $|\mathcal{T}|=\sum_{i=1}^{\nu_1} |\mathcal{T}_i|$.

By Definition~\ref{def:maxsubp}, the  MSP generated by $\alg{}$ have input (resp., output) matrices of size greater or equal to $2\sqrt{M} \times 2\sqrt{M}$. Recall that we denote as $\subpOutput{}$ the set of vertices which correspond to the outputs of the $\nu_2$ Type 2 MSPs, we have $|\subpOutput{}|\geq 4M\nu_l$. 


Let $\mathcal{C}$ be any computation schedule for the sequential execution of $\mathcal{A}$ using a cache of size $M$. We partition $\mathcal{C}$ into non-overlapping segments  $\mathcal{C}_1,\mathcal{C}_2,\ldots$ such that during each $\mathcal{C}_j$ either (a) exactly $M^{3/2}$ distinct values~\footnote{For simplicity of presentation, we assume $M^{3/2}\in \mathbb{N}^{+}$.} corresponding to vertices in $\mathcal{T}$, denoted as $\mathcal{T}^{(j)}$, are explicitly computed (i.e., not loaded from slow memory), 
or (b) $4M$ distinct values corresponding to vertices in $\subpOutput{}$ (denoted as $\subpOutput{}_j$) are evaluated for the \emph{first time}. Clearly there are at least $\max \{|\mathcal{T}|/{M^{3/2}},\nu_2 \}$ such segments. 

Below we show that the number $g_j$ of  \io{} operations  executed during each $\mathcal{C}_j$ satisfies $g_j\geq c M$ for case (a) and $g_j\geq M$ for case (b), from which the theorem follows. 

\textbf{Case (a):} 
For each Type 1 MSP $P_i$  let $\mathcal{T}^{(j)}_i = \mathcal{T}^{(j)}\cap \mathcal{T}_i$. As the $\nu_1$ sub-CDAGs corresponding each to one of the Type 1 MSPs are vertex-disjoint, so are the sets $\mathcal{T}_i$. Hence, the $\mathcal{T}^{(j)}_i$'s constitute a partition of $\mathcal{T}^{(j)}$. 
    Let $\mathbf{A}_i$ and $\mathbf{B}_i$ (resp., $\mathbf{C}_i$) denote the input matrices (resp., output matrix) of $P_i$ with $\mathbf{A}_i,\mathbf{B}_i,\mathbf{C}_i\in \ri^{n_i \times n_i}$, and let $A_i$ and $B_i$ (resp., $C_i$) denote the set of the variables corresponding to the entries of $\mathbf{A}_i$ and $\mathbf{B}_i$ (resp., $\mathbf{C}_i$). Further, we denote as $T_i$ the set of values corresponding to the vertices in $\mathcal{T}_i$.
    For $r,s = 0,1,\ldots,n_i-1$, we say that $\mathbf{C}_i[r][s]$ is ``\emph{active during} $\mathcal{C}_j$'' if \emph{any} of the elementary multiplications $\mathbf{A}_i[r][k]\mathbf{B}_i[k][s]$, for $k= 0,1,\ldots,n_i-1$, correspond to any of the vertices in $\mathcal{T}^{(j)}_i$. Further we say that a $\mathbf{A}_i[r][s]$ (resp., $\mathbf{B}_i[r][s]$) is ``\emph{accessed during} $\mathcal{C}_j$'' if \emph{any} of the elementary multiplications $\mathbf{A}_i[r][s]\mathbf{B}_i[s][k]$ (resp., $\mathbf{A}_i[k][r]\mathbf{B}_i[r][s]$), for $k= 0,1,\ldots,n_i-1$, correspond to any of the vertices in $\mathcal{T}^{(j)}_i$.
    
    Our analysis makes use of the following property of standard matrix multiplication algorithms:
    \begin{lemma}[{Loomis-Whitney inequality~\cite[Lemma 2.2]{irony2004communication}}]\label{lem:mywl}
    Let $\subpInput{}_{i,\mathbf{A}}'$ (resp., $\subpInput{}_{i,\mathbf{B}}'$) denote the set of vertices corresponding to the entries of $\mathbf{A}_i$ (resp., $\mathbf{B}_i$) which are accessed during $\mathcal{C}_j$, and let $\subpOutput{}
    _i'$ denote the set of vertices corresponding to the entries of $\mathbf{C}_i$  which are active during $\mathcal{C}_j$. Then 
    \begin{equation}\label{eq:mywl}
        |\mathcal{T}^{(j)}_i|\leq \sqrt{|\subpInput{}_{i,\mathbf{A}}'||\subpInput{}_{i,\mathbf{B}}'||\mathcal{Z}_i'|}. 
    \end{equation}
    \end{lemma}
    
    Lemma~\ref{lem:mywl} is a reworked version of a property originally presented by Irony et al.~\cite[Lemma 2.2]{irony2004communication}, which itself is a consequence of the Loomis-Whitney geometric theorem~\cite{loomis1949}.
    
  Let $\mathbf{C}_i[r][s]$ be active during $\mathcal{C}_j$. In order to compute $\mathbf{C}_i[r][s]$ \emph{entirely} during $\mathcal{C}_j$ (i.e., without using partial accumulation of the summation $\sum_{k=0}^{n_i-1} \mathbf{A}_i[r][k]\mathbf{B}[k][s]$), it will be necessary to evaluate all the $n_i$ elementary products $\mathbf{A}_i[r][k]\mathbf{B}_i[k][s]$, for $k=0,1,\ldots, n_i-1$, during $\mathcal{C}_j$ itself. Thus, at most $\lfloor|\mathcal{T}^{(j)}_i|/{n_i}\rfloor$ entries of $\mathbf{C}_i[r][s]$ can be entirely computed during $\mathcal{C}_j$.
   
  Let $\mathbf{C}_i[r][s]$ denote an entry of $\mathbf{C}_i$ which is active but not entirely computed during $\mathcal{C}_j$. There are two possible scenarios:
  \begin{itemize}
      \item $\mathbf{C}_i[r][s]$ is computed during $\mathcal{C}_j$: The computation thus requires for a partial accumulation of $\sum_{k=0}^{n_i-1} \mathbf{A}_i[r][k]\mathbf{B}[k][s]$  to have been previously computed and either held in the cache at the beginning of $\mathcal{C}_j$, or to moved to cache using a \texttt{read} \io{} operation during $\mathcal{C}_j$; 
      \item $\mathbf{C}_i[r][s]$ is not computed during $\mathcal{C}_j$: As $\mathcal{C}$ is a parsimonious computation, the partial accumulation of $\sum_{k=0}^{n_i-1} \mathbf{A}_i[r][k]\mathbf{B}[k][s]$ obtained from the elementary products computed during $\mathcal{C}_j$ must either remain in the cache at the end of $\mathcal{C}_j$, or be moved to slow memory using a \texttt{write} \io{} operation during $\mathcal{C}_j$;
  \end{itemize}
  In both cases, any  partial accumulation  either held in memory at the beginning (resp., end) of $\mathcal{C}_j$ or read from slow memory to cache (resp., written from cache to slow memory) during $\mathcal{C}_j$ is, by definition, not shared between multiple entries in $\mathbf{C}_i$.
   
  Let $G^{\mathcal{A}}_{P_i}$ denote the sub-CDAG of $G^{\mathcal{A}}$ corresponding to the Type 1 MSP $P_i$. In the following, we refer as $D'_i$ to the set of vertices of $G^{\mathcal{A}}_{P_i}$ corresponding to the values of such partial accumulators.
  For each of the least $|\dom{}'_i|= \max\{0,|\mathcal{Z}_i'|-|\mathcal{T}^{(j)}_i|/n_i\}$  entries of $\mathbf{C}_i$ which are active but not entirely computed during $\mathcal{C}_j$, either one of the entries of the cache must be occupied at the beginning of $\mathcal{C}_j$, or one \io{} operation is executed during  $\mathcal{C}_j$. Let $\dom{}' = \cup_{i=1}^{\nu_1}\dom{}'_i$. As, by Lemma~\ref{lem:distinct}, the sub-CDAGs corresponding to the $\nu_1$ Type 1 MSPs are vertex disjoint, so are the the sets $\dom{}'_i$. Let $\mathcal{Z}'=\sum_{i=1}^{\nu_l} |\mathcal{Z}_i'|$. We have:
  \begin{equation}\label{eq:dom1}
      |\dom{}'| = \sum_{i=1}^{\nu_1}|\dom{}'_i| = \sum_{i=1}^{\nu_1} \max\{0,|\mathcal{Z}_i'|-|\mathcal{T}^{(j)}_i|/n_i\} \geq |\mathcal{Z}|-|\mathcal{T}^{(j)}|/2\sqrt{M},
  \end{equation}
  where the last passage follows from the fact that, by Definition~\ref{def:maxsubp}, $n_i\geq2M$. 
   
  From Lemma~\ref{lem:mywl}, the set of vertices $\subpInput{}_{i,\mathbf{A}}'$ (resp., $\subpInput{}_{i,\mathbf{B}}'$) which correspond to entries of $\mathbf{A}_i$ (resp., $\mathbf{B}_i$) which are accessed during $\mathcal{C}_j$ satisfies
  $
      |\subpInput{}_{i,\mathbf{A}}'||\subpInput{}_{i,\mathbf{B}}'| \geq  |\mathcal{T}^{(j)}_i|^2/|\mathcal{Z}_i'|
  $.
  Hence, at least $|\subpInput{}_{i,\mathbf{A}}'|+|\subpInput{}_{i,\mathbf{B}}'| \geq 2 |\mathcal{T}^{(j)}_i|/\sqrt{|\mathcal{Z}_i'|}$ entries from the input matrices of $P_i$ are accessed during $\mathcal{C}_j$. Let $\mathcal{Y}$ denote the set of vertices corresponding to the entries of the input matrices $\mathbf{A}_i,\mathbf{B}_i$ of $P_i$. From Lemma~\ref{lem:domtype3} we have that there exists a set $\subpInput{}_i'\subseteq \subpInput{}_i$, with\sloppy $|\subpInput{}_i'|\geq \max \{|\subpInput{}_{i,\mathbf{A}}'|,|\subpInput{}_{i,\mathbf{B}}'|\}\geq |\mathcal{T}^{(j)}_i|/\sqrt{|\mathcal{Z}_i'|}$,
  such that the vertices in $\subpInput{}_i'$ are connected by vertex disjoint pats to the vertices in $\mathcal{T}^{(j)}_i$. Let $Y = \cup_{i=1}^{\nu_1}\subpInput{}_i'$. As, by Lemma~\ref{lem:distinct}, the sub-CDAGs corresponding to the $\nu_l$ Type 1 MSPs are vertex disjoint, so are the the sets $\subpInput{}_i'$ for $i=1,2,\ldots,\nu_1$. Hence
    $$
        |Y| = \sum_{i=1}^{\nu_1}|\subpInput{}_i'|\geq  \sum_{i=1}^{\nu_1}\frac{|\mathcal{T}^{(j)}_i|}{\sqrt{|\mathcal{Z}_i'|}}.
    $$
    From Lemma~\ref{lem:domtype2} any dominator $\dom{}_{Y}$ of $Y$, must be such that $$|\dom{}_{Y}|\geq \min\bigl\{2M, \frac{\sum_{i=1}^{\nu_1} |\mathcal{T}^{(j)}_i| }{\sum_{i=1}^{\nu_1}\sqrt{|\mathcal{Z}_i'|}}\bigr\} = \min\bigl\{2M,\frac{ |\mathcal{T}^{(j)}| }{\sqrt{|\mathcal{Z}'|}}\bigr\}.$$ 
    Hence, we can conclude that any dominator $\dom{}''$ of $\mathcal{T}^{(j)}$ must be such that 
    \begin{equation}\label{eq:dom2}
        |\dom{}''|\geq \min\bigl\{2M,|\mathcal{T}^{(j)}|/\sqrt{|\mathcal{Z}'|}\bigr\}.
    \end{equation}
    
    Consider the set $\dom{}$ of vertices of $G_{\mathcal{A}}$ corresponding to the at most $M$ values stored in the cache at the beginning of $\mathcal{C}_j$ and to the at most $g_j$ values loaded into the cache form the slow memory (resp., written into the slow memory from the cache) during  $\mathcal{C}_j$ by means of a \texttt{read} (resp., \texttt{write}) \io{} operation. Clearly, $|\dom|\leq M +g_j$.
    
    In order for the $M^{3/2}$ values from $\mathcal{T}^{(j)}$ to be computed during $\mathcal{C}_j$ there must be no path connecting any vertex in $\mathcal{T}^{(j)}$, and, hence, $Y$, to any input vertex of $G^{\mathcal{A}}$ which does not have at least one vertex in $\dom{}$, that is  $\dom{}$ has to admit a subset $\dom{}''\subseteq \dom{}$ such that $\dom{}''$ is a \emph{dominator set} of $\mathcal{T}^{(j)}$. Note that, as the values corresponding to vertices in $\mathcal{T}^{(j)}$ are actually computed during $C_J$ (i.e., not loaded from memory using a $\texttt{read}$ \io{} operation), $\dom{}''$ does not include vertices in $\mathcal{T}^{(j)}$ itself. Further, as motivated in the previous discussion, $\dom{}$ must include all the vertices in the set $\dom{}'$ corresponding to values of partial accumulators of the active output values of Type 1 MSPs during $\mathcal{C}_j$. 
    
	By construction, $\dom{}'$ and  $\dom{}''$ are vertex disjoint. Hence, from~\eqref{eq:dom1} and~\eqref{eq:dom2} we have:
	\begin{equation*}
	    |\dom{}| \geq |\dom{}'|+|\dom{}''| \geq |\mathcal{Z}'|-|\mathcal{T}^{(j)}|/ 2\sqrt{M}+ \min\bigl\{2M,|\mathcal{T}^{(j)}|/\sqrt{|\mathcal{Z}'|}\bigr\}.
	\end{equation*}
	As, by construction, $|\mathcal{T}^{(j)}|=M^{3/2}$, we have:
	\begin{align}\label{eq:last1}
	    |\dom{}| &> |\mathcal{Z}'| - M/2 + \min\bigl\{2M,M^{3/2}/\sqrt{|\mathcal{Z}'|}\bigr\}.
	\end{align}
	By studying its derivative after opportunely accounting for the minimum, we have that~\eqref{eq:last1}
	 is minimized for $|\mathcal{Z}'| = 2^{-2/3}M$. Hence we have:
	 $|\dom{}| > 2^{-2/3}M + 2^{1/3}M^{3/2} - M/2 = 1.38988157484 M$.
	Whence $|D|\leq M +g_j$, which implies $g_j\geq |D|-M > 0.38988157484 M$, as stated above.\\
	
	\textbf{Case (b):} In order for the $4M$ values from $\subpOutput{}_j$ to be computed during $\mathcal{C}_j$ there must be no path connecting any vertex in $\subpOutput{}_j$ to any input vertex of $G_{\mathcal{A}}$ which does not have at least one vertex in $\dom_j$, that is  $\dom_j$ has to be a \emph{dominator set} of $\subpOutput{}_j$.
	 From Lemma~\ref{lem:domtype1}, any dominator set $D$ of any subset $Z\subseteq \subpOutput{}$ with $|Z|\leq 4M$ satisfies $|D|\geq |Z|/2$, whence $M+g_i\geq |\dom_i|\geq |\subpOutput{}_j|/2 =2M$, which implies $g_j\geq M$ as stated above. This concludes the proof for the sequential case in~\eqref{eq:hmmmain}.\\
  
	
	The proof for the bound for the parallel model in ~\eqref{eq:hmmpar}, follows from the observation that at least one of the $\nproc{}$ processors, denoted as $\nproc{}^*$, must compute at least $|\mathcal{T}|/\nproc{}$ values corresponding to vertices in $\mathcal{T}$ or $|\subpOutput{}|/\nproc{}$ values corresponding to vertices in $\subpOutput{}$ (or both). The bound follows by applying the same argument discussed for the sequential case to the computation executed by $\nproc^*$. 
\end{proof}

Note that if $\mathcal{A}$ is such that the product is entirely computed using an algorithm from the standard class (resp., a fast matrix multiplication algorithm), the bounds of Theorem~\ref{thm:genmatmul} corresponds asymptotically to the results of~\cite{jia1981complexity} for the sequential case and~\cite{irony2004communication} for the parallel case (resp., the results in~\cite{bilardi2017complexity}).\\

 The bound in~\eqref{eq:hmmpar} can be further generalized to a slightly different model in which each of the $\nproc{}$ processors is equipped with a cache memory of size $M$ and a slow memory of unbounded size. In such case, the \io{} complexity of the algorithms in $\hmm{}$ corresponds to the total number of both messages received and the number of \io{} operations used to move data from cache to slow memory and vice versa (i.e., \texttt{read} and \texttt{write}) executed by at least one of the  $\nproc{}$ processors.

\paragraph{\io{} lower bound for uniform, non stationary algorithms in $\uhmm{}$:}
For the sub-class of uniform, non stationary algorithms $\uhmm{}$, given the values of $n$, $M$ and $\nz{}$ is possible to compute a closed form expression for the values of $\nu_1,\nu_2$ and $|\mathcal{T}|$.\footnote{The constants terms in Theorem~\ref{thm:corgenmatmul} assume that $n,\nz{}$ and $M$ are powers of two. If that not the case the statement holds with minor adjustments.} Then, by applying  Theorem~\ref{thm:genmatmul} we have:

\begin{theorem}\label{thm:corgenmatmul}
Let $\mathcal{A}\in \uhmm{}$ be an algorithm to multiply two square matrices $\mathbf{A},\mathbf{B} \in\ri^{n\times n}$. If run on a sequential machine with  cache of size $M$ and such that up to $B$  memory words stored in consecutive memory locations can be moved from cache to slow memory  and vice versa using a single memory operation, $\mathcal{A}$'s \io{} complexity satisfies:
\begin{equation}\label{eq:hmmmaincor}
	IO_{\mathcal{A}}\left(n,M,B\right) \geq \max \{2n^2, \left(\frac{n}{\max\{\nz{},2\sqrt{M}\}}\right)^{\log_2 7} \left(\max\bigl\{1, \frac{\nz{}}{2\sqrt{M}}\bigr\}\right)^3 M\}B^{-1}
\end{equation}
 If run on $\nproc{}$ processors each equipped with a local memory of size $M < n^2$ and where for each \io{} operation it is possible to move up to $\msgsize{}$ memory words, $\mathcal{A}$'s \io{} complexity satisfies:
\begin{equation}\label{eq:hmmmaincorpar}
	IO_{\mathcal{A}}\left(n,M,\msgsize{},P\right) \geq \left(\frac{n}{\max\{\nz{},2\sqrt{M}\}}\right)^{\log_2 7} \left(\max\Bigl\{1, \frac{\nz{}}{2\sqrt{M}}\Bigr\}\right)^3 \frac{M}{\nproc{}\msgsize{}}.
\end{equation}
\end{theorem}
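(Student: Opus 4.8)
The plan is to derive Theorem~\ref{thm:corgenmatmul} directly from Theorem~\ref{thm:genmatmul} by observing that $\uhmm{} \subseteq \hmm{}$ and then computing closed-form expressions for $\nu_2$ and $|\mathcal{T}|$ in terms of $n$, $M$, and $\nz{}$. Since algorithms in $\uhmm{}$ are uniform, every sub-problem of a given size is treated identically, so the fast-recursion tree is a complete $7$-ary tree: after $k$ levels there are exactly $7^k$ sub-problems, each of side $n/2^k$. The only structural parameter is the level at which the fast recursion is cut off, which occurs once the sub-problem side reaches $\nz{}$. Because the MSP frontier of Definition~\ref{def:maxsubp} sits at side $2\sqrt{M}$, the analysis naturally splits into two regimes according to whether $\nz{} \geq 2\sqrt{M}$ or $\nz{} < 2\sqrt{M}$.

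First I would treat the case $\nz{} \geq 2\sqrt{M}$. Here the fast recursion halts at side $\nz{} \geq 2\sqrt{M}$ and switches to a standard algorithm, so every leaf of the recursion tree is a Type~1 MSP and there are no Type~2 MSPs, i.e. $\nu_2 = 0$. Counting leaves gives $\nu_1 = 7^{\log_2(n/\nz{})} = (n/\nz{})^{\log_2 7}$, and each such MSP is a standard multiplication of $\nz{} \times \nz{}$ matrices contributing $\nz{}^3$ elementary products, so $|\mathcal{T}| = (n/\nz{})^{\log_2 7}\,\nz{}^3$. Substituting into the $c|\mathcal{T}|M^{-1/2}$ term of~\eqref{eq:hmmmain} yields $\Omega\bigl((n/\nz{})^{\log_2 7}\nz{}^3 M^{-1/2}\bigr)$, which matches the claimed bound once one checks the identity $(\nz{}/(2\sqrt{M}))^3 M = \nz{}^3/(8\sqrt{M})$ and that the constant $c$ exceeds $1/8$.

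Next I would treat the complementary case $\nz{} < 2\sqrt{M}$. Now the fast recursion continues past the $2\sqrt{M}$ threshold, so the sub-problems of side exactly $2\sqrt{M}$ are computed by the fast scheme and generate children of side $\sqrt{M} < 2\sqrt{M}$; these are precisely the Type~2 MSPs, and since the standard-algorithm switch happens below the MSP threshold there are no Type~1 MSPs, giving $|\mathcal{T}| = 0$. The number of side-$2\sqrt{M}$ nodes is $\nu_2 = 7^{\log_2(n/(2\sqrt{M}))} = (n/(2\sqrt{M}))^{\log_2 7}$, and substituting into the $\nu_2 M$ term of~\eqref{eq:hmmmain} gives exactly $(n/(2\sqrt{M}))^{\log_2 7}M$, which coincides with the theorem's expression when $\max\{1,\nz{}/(2\sqrt{M})\}=1$. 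In both regimes the two formulas unify into the single expression with $\max\{\nz{},2\sqrt{M}\}$ in the base and $\left(\max\{1,\nz{}/(2\sqrt{M})\}\right)^3$ as the cubed factor, while the $2n^2$ lower bound is inherited verbatim from~\eqref{eq:hmmmain}; the parallel bound~\eqref{eq:hmmmaincorpar} follows identically from~\eqref{eq:hmmpar}.

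The main obstacle is not the substitution but the verification that the MSP counts are exactly as claimed: one must confirm that in each regime precisely one of the two MSP types arises, so that the two cases are genuinely disjoint and jointly exhaustive, and that the complete $7$-ary tree structure guaranteed by uniformity makes the leaf counts exact rather than merely asymptotic. The power-of-two assumption recorded in the footnote is what makes the cut-off level integral and the counts clean; without it the recursion tree is no longer perfectly balanced and one would incur the ``minor adjustments'' alluded to, replacing the exact equalities by $\Theta(\cdot)$ estimates.
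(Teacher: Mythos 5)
Your proposal is correct and matches the paper's own proof essentially step for step: the paper likewise specializes Theorem~\ref{thm:genmatmul} by using uniformity to get a complete $7$-ary recursion tree cut at side $\max\{\nz{},2\sqrt{M}\}$, yielding $\nu_1=(n/\nz{})^{\log_2 7}$ with $|\mathcal{T}|=\nu_1\nz{}^3$ when $\nz{}$ exceeds the MSP threshold and $\nu_2=(n/(2\sqrt{M}))^{\log_2 7}$ otherwise, under the same power-of-two assumption. Your explicit check that $c\,\nz{}^3 M^{-1/2} = 8c\left(\nz{}/(2\sqrt{M})\right)^3 M$ with $8c>1$ is a small verification the paper leaves implicit, but it is the same argument.
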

\begin{proof}
The statement follows by bounding the values $\nu_1$,$\nu_2$ and $|\mathcal{T}|$ for $\mathcal{A}\in\uhmm{}$, and by applying the general result from Theorem~\ref{thm:genmatmul}.
In order to simplify the presentation, in the following we assume that the values $n,\nz{},M$ are powers of two. If that is not the case, the theorem holds with some minor adjustments to the constant  multiplicative factor. 

Let let $i$ be the smallest value in $\mathbb{N}$ such that $n/2^i = \max\{\nz{}, 2\sqrt{M}\}$. By definition of $\hmm{}$, at each of the $i$ recursive levels $\mathcal{A}$ generates $7^i$ sub-problems of size $n/2^i$.  

\begin{itemize}
    \item If $\nz{}> 2\sqrt{M}$, $\mathcal{A}$ generates 
    $$\nu_1 = 7^i = 7^{\log_2 n/\nz{}}= \left(\frac{n}{\nz{}}\right)^{\log_2 7}$$
    Type 1 MSP each with input size $\nz{}\times \nz{}$. As, by Definition~\ref{def:maxsubp}, the Type 1 MSP are input dijoint we have:
    $$|\mathcal{T}| = \nu_1 \nz{}^3 = \left(\frac{n}{\nz{}}\right)^{\log_2 7} \nz{}^3.$$ 
    \item Otherwise, if $\nz{}\leq 2\sqrt{M}$, $\mathcal{A}$ generates 
    $$\nu_2 = 7^i = 7^{\log_2 n/2\sqrt{M}}= \left(\frac{n}{2\sqrt{M}}\right)^{\log_2 7}$$
    Type 2 MSP each with input size $2\sqrt{M} \times 2\sqrt{M}$. 
\end{itemize}
The statement then follows by applying the result in Theorem~\ref{thm:genmatmul}.
\end{proof}
 
 Theorem~\ref{thm:corgenmatmul} extends the result by Scott~\cite{scott2015complexity} by expanding the class of hybrid matrix multiplication algorithms being considered (e.g., it does not limit the class of
 standard matrix multiplication to the divide and conquer algorithm based on block-partitioning), and by removing the assumption that no intermediate value may be recomputed. 
\paragraph*{On the tightness of the bound:} An opportune composition of the cache-optimal version of the Strassen's algorithm~\cite{strassen1969gaussian} (as discussed in ~\cite{ballard2012communicationalg}) with the standard cache-optimal divide and conquer algorithm for square matrix multiplication based on block-partitioning~\cite{cannon1969cellular} leads to a sequential hybrid algorithms  in $\hmm{}$ (resp., $\uhmm{}$) whose \io{} cost asymptotically matches the \io{} complexity lower bounds in Theorem~\ref{thm:genmatmul}~\eqref{eq:hmmmain} (resp., Theorem~\ref{thm:corgenmatmul}~\eqref{eq:hmmmaincor}). 

Parallel algorithms in $\hmm{}$ (resp., $\uhmm{}$) asymptotically matching the \io{} lower bounds in for the parallel case  in Theorem~\ref{thm:genmatmul}~\eqref{eq:hmmpar} (resp., Theorem~\ref{thm:corgenmatmul}~\eqref{eq:hmmmaincorpar}) can be obtained by composing the communication avoiding version of Strassen's algorithm by Ballard et al.~\cite{ballard2012communicationalg} with the communication avoiding ``\emph{2.5}'' standard algorithm by Solomonik and Demmel~\cite{solomonik2011communication}. 

Hence, the lower bounds in Theorem~\ref{thm:genmatmul} and  Theorem~\ref{thm:corgenmatmul}  are asymptotically tight and the mentioned algorithms form $\hmm{}$ and $\uhmm{}$ whose \io{} cost asymptotically match the lower bounds are indeed \io{} optimal.

Further, as the mentioned \io{} optimal algorithms from $\hmm{}$ and $\uhmm{}$ do not recompute any intermediate value~\footnote{We do not consider replication of the input used by the mentioned parallel algorithms as recomputation, but rather as a \emph{repeated access} to the input values.}, we can conclude that using recomputation may lead to at most a constant factor reduction of the \io{} cost of hybrid algorithms in $\hmm{}$ and $\uhmm{}$.
\paragraph*{Generalization to fast matrix multiplication model with base other than $2\times 2$:}
The general statement of Theorem~\ref{thm:genmatmul} can be extended to by enriching $\hmm{}$ to include any fast Strassen-like algorithm with base case other than $2\times 2$ provided that the associated encoder CDAG satisfies properties equivalent to those expressed by Lemma~\ref{lem:distinct}(i.e., the input disjointedness of the sub-problems generated at each recursive step) and Lemma~\ref{lem:conneconder} (i.e., the connectivity between input and output of the encoder CDAGs via vertex disjoint paths) for the $2\times 2$ base. If these properties hold, so does the general structure of Theorem~\ref{thm:genmatmul}, given an opportune adjustment of the definition of maximal sub-problem. 


\section{Conclusion}
\vspace{-2mm}
This work introduced the first characterization of the \io{}
complexity of hybrid matrix multiplication algorithms combining fast Strassen-like algorithms with standard algorithms.  We established asymptotically
tight lower bounds that hold even when recomputation is allowed.
The generality of the technique used the analysis  makes it promising for the analysis of other hybrid
recursive algorithms, e.g., for hybrid algorithms for integer multiplication~\cite{BilardiS19}.


Our results contribute to the study of the effect of recomputation with respect to the \io{} complexity of  CDAG algorithms. While we are far from a characterization of those CDAGs for which recomputation is effective, this broad goal remains a fundamental challenge for any attempt toward a general theory of the communication requirements of computations. 
\subsection*{Acknowledgments}
The author would like to thank Gianfranco Bilardi at the University of Padova for the helpful suggestions and discussions.

\bibliographystyle{plain}
\bibliography{bibliography}
 \cleardoublepage
 \appendix
 \section{Proofs of technical lemmas}\label{app:proofs}
\subsection{Proof of Lemma~\ref{lem:domtype2}}\label{app:domtype2}
The proof of Lemma~\ref{lem:domtype2} uses an analysis similar to that in~\cite{bilardi2017complexity}(Lemma 6), albeit with several important variations.
Before delving into the details of the proof of Lemma~~\ref{lem:domtype2} we introduce the following technical lemma:

\begin{lemma}\label{lem:lastminute}
Let $a_i,a_2,\ldots a_i\in \mathbb{N}$ (resp., $b_i,b_2,\ldots b_i\in \mathbb{N}$) such that $b_j = 0$ if and only if $a_i=0$ and such that:
$$
\frac{a_1}{\sqrt{b_1}}\geq \frac{a_2}{\sqrt{b_2}}\geq ....\geq \frac{a_i}{\sqrt{b_i}}
$$
using the convention that $0/0=0$. Then we have:
\begin{equation}
    \frac{a_1}{\sqrt{b_1}}+ \frac{1}{2}\sum_{j=2}^i \frac{a_j}{\sqrt{b_j}} \geq \frac{\sum_{j=1}^i a_i}{\sqrt{\sum_{j=1}^j b_j}}
\end{equation}
\end{lemma}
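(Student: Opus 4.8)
The plan is to reduce the inequality to a comparison of two nonnegative scalars and then dispatch it by squaring. First I would set $r_j := a_j/\sqrt{b_j}$ for each $j$ (using the convention $0/0=0$, so that $r_j=0$ exactly when $a_j=b_j=0$); the hypothesis then reads $r_1 \geq r_2 \geq \cdots \geq r_i \geq 0$, and the goal becomes
\[
r_1 + \tfrac12\sum_{j=2}^i r_j \;\geq\; \frac{\sum_{j=1}^i a_j}{\sqrt{\sum_{j=1}^i b_j}}.
\]

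The key first step is to bound the right-hand side by the Euclidean norm of the vector $(r_1,\dots,r_i)$. Discarding the indices with $b_j=0$ (which contribute $0$ to both $\sum_j a_j$ and $\sum_j b_j$, by the convention), I would apply the Cauchy--Schwarz inequality to the two vectors with components $r_j$ and $\sqrt{b_j}$: since $r_j\sqrt{b_j}=a_j$, this gives $\bigl(\sum_j a_j\bigr)^2 \leq \bigl(\sum_j r_j^2\bigr)\bigl(\sum_j b_j\bigr)$, and hence, taking square roots,
\[
\frac{\sum_{j=1}^i a_j}{\sqrt{\sum_{j=1}^i b_j}} \;\leq\; \sqrt{\sum_{j=1}^i r_j^2}.
\]
It therefore suffices to establish the \emph{one-scalar-per-index} inequality $r_1 + \tfrac12\sum_{j\geq 2} r_j \geq \sqrt{\sum_{j} r_j^2}$, which no longer involves the $b_j$ at all.

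Finally I would prove this last inequality by squaring both sides (both are nonnegative). Expanding the left-hand square yields $r_1^2 + r_1\sum_{j\geq 2} r_j + \tfrac14\bigl(\sum_{j\geq 2} r_j\bigr)^2$, and it remains to show this dominates $r_1^2 + \sum_{j\geq 2} r_j^2$, i.e.
\[
r_1\sum_{j=2}^i r_j + \tfrac14\Bigl(\sum_{j=2}^i r_j\Bigr)^2 \;\geq\; \sum_{j=2}^i r_j^2.
\]
Here the ordering is exactly what is needed: since $r_1 \geq r_j$ for every $j\geq 2$, we have $r_1\sum_{j\geq 2} r_j = \sum_{j\geq 2} r_1 r_j \geq \sum_{j\geq 2} r_j^2$, while the leftover square term is nonnegative, so the inequality follows. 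I do not expect a genuine obstacle; the only points requiring care are the bookkeeping for the degenerate indices $a_j=b_j=0$ under the $0/0=0$ convention (handled by simply omitting them, as they alter neither side) and invoking the monotonicity hypothesis with $r_1$ as the maximum, which is precisely how the factor-$1$ versus factor-$\tfrac12$ asymmetry in the statement is exploited.
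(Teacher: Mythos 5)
Your proof is correct, but it takes a genuinely different route from the paper's. The paper proves this lemma by induction on $i$: it applies the inductive hypothesis to the first $i$ terms, reducing everything to the two-term inequality $\frac{a'}{\sqrt{b'}}+\frac{1}{2}\frac{a_{i+1}}{\sqrt{b_{i+1}}}\geq \frac{a'+a_{i+1}}{\sqrt{b'+b_{i+1}}}$ with $a'=\sum_{j\leq i}a_j$ and $b'=\sum_{j\leq i}b_j$, which it then settles via the substitution $a'=ax$, $b'=by$ together with a case analysis on whether $b'\geq 3b/4$ --- a noticeably more delicate calculation. You instead eliminate the $b_j$ in a single Cauchy--Schwarz step, writing $\sum_j a_j=\sum_j r_j\sqrt{b_j}\leq \bigl(\sum_j r_j^2\bigr)^{1/2}\bigl(\sum_j b_j\bigr)^{1/2}$, and then prove the purely scalar inequality $r_1+\frac{1}{2}\sum_{j\geq 2}r_j\geq \bigl(\sum_j r_j^2\bigr)^{1/2}$ by squaring and invoking $r_1\geq r_j$; both steps check out, the degenerate indices with $a_j=b_j=0$ are handled correctly under the $0/0=0$ convention, and no induction or case split is needed. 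It is worth noting that your squaring step is essentially the same manipulation the paper performs inside the proof of Lemma~\ref{lem:stra_part1} (there with $r_j=\sqrt{\delta^{(j)}}$, after the common factor $4\sqrt{M}$ is pulled out), so your argument in effect unifies that computation with the present lemma. What your route buys is a shorter, cleaner, non-inductive proof whose only external ingredient is Cauchy--Schwarz; the paper's induction buys nothing essential here beyond stylistic consistency with the recursive CDAG arguments, and its two-variable case analysis, as written in the appendix, contains minor slips that your approach sidesteps entirely.
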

\begin{proof}
The proof is by induction on the value of $i$. In the base case case $i=1$, hence $a=a_1$ and $b=b_1$. Thus the statement is trivially verified. 
We assume inductively that the statement holds for $i\geq 1$ and we show that it holds also for $i+1$.
Let 
$$
\frac{a_1}{\sqrt{b_1}}\geq \frac{a_2}{\sqrt{b_2}}\geq ....\geq \frac{a_i+1}{\sqrt{b_i+1}}
$$
be can therefore apply the inductive hypothesis to the first $i$ elements and obtain:
\begin{equation}\label{eq:lastmin1}
    \frac{a_1}{\sqrt{b_1}}+ \frac{1}{2}\sum_{j=2}^{i+1} \frac{a_j}{\sqrt{b_j}} \geq 
    \frac{\sum_{j=1}^i a_j}{\sqrt{\sum_{j=1}^i b_j}}+ \frac{1}{2}\frac{a_{i+1}}{\sqrt{b_{i+1}}}.
\end{equation}
Let $a'= \sum_{j=1}^i a_j$ and $b'= \sum_{j=1}^i b_j$. If $a_{i+1} = 0$ the statement is trivially verified. In the following we consider the case for which  $a_{i+1} > 0$ and, by assumption, $b_{i+1}>0$.

From~\eqref{eq:lastmin1}, we have that the lemma is guaranteed to be verified if the following holds:
\begin{equation}\label{eq:lastmin2} 
    \frac{a'}{\sqrt{b'}}+ \frac{1}{2}\frac{a_{i+1}}{\sqrt{b_{i+1}}} \geq \frac{a'+a_{i+1}}{\sqrt{b'+b_{i+1}}} = \frac{a'}{\sqrt{b'+b_{i+1}}} + \frac{a_{i+1}}{\sqrt{b'+b_{i+1}}}
\end{equation}
Clearly~\eqref{eq:lastmin2} is verified for $b'\geq b_{i+1}$. Since, by assumption, $b=b'+b_{i+1}$, we can therefore conclude that the statement is verified for $b'\geq3b/4$.

In the following we consider the case for $b'<3b/4$.
Let $a' = ax$ (resp., $b'=by$), with $x\in (0,1]$. By assumption $x,y\neq 0$. Further, as we are considering the case such that $b'<3b/4$, we have that $y\in (0,3/4)$. Since, by assumption, $a = a'+a_{i+1}$ and $b = b'+b_{i+1}$, we have:
\begin{align*}
    \frac{a'}{\sqrt{b'}}+ \frac{1}{2}\frac{a_{i+1}}{\sqrt{b_{i+1}}} & = 
    \frac{ax}{\sqrt{by}}+ \frac{1}{2}\frac{a(1-x)}{\sqrt{b(1-y)}}\\
    &=\frac{a}{\sqrt{b}}\left(\frac{x}{\sqrt{y}}+ \frac{1-x}{2\sqrt{1-y}}
    \right)
\end{align*}
Therefore, the Lemma is clearly verified if the following holds:
\begin{equation}\label{eq:lastmin3}
    \frac{x}{\sqrt{y}}+ \frac{1-x}{\sqrt{1-y}} = \frac{2x\sqrt{1-y}+ (1-x)\sqrt{y}}{2\sqrt{y}\sqrt{1-y}} \geq 1.
\end{equation}
By multiplying both sides of the previous inequality by $2\sqrt{y}\sqrt{y-1}$ (which is clearly greater than $0$), we have that~\eqref{eq:lastmin3}  holds if:
\begin{equation}\label{eq:lastmin4}
    2x\sqrt{1-y}+ \left(1-x\right)\sqrt{y}\geq 2\sqrt{y}\sqrt{1-y},
\end{equation}
which in turn holds if:
\begin{equation}\label{eq:lastmin4}
    2\sqrt{1-y}\left(x-\sqrt{y}\right)\geq  \sqrt{y}\left(x-1\right),
\end{equation}
As $y\in (0,3/4)$ we have $2\sqrt{1-y}>1>\sqrt{y}$ and $\left(x-\sqrt{y}\right)\left(x-1\right)$. Hence~\eqref{eq:lastmin4} holds ant the Lemma follows.
\end{proof}

Recall that given an algorithm $\mathcal{A}\in \hmm{}$ used to multiply input squared matrices $\mathbf{A}, \mathbf{B} \in \ri^{n \times n}$, we denote as  $G^{\mathcal{A}}$  the corresponding CDAG constructed according to the description in Section~\ref{sec:CDAG}. Further, we denote as $\globalInput{}$ the set of input vertices of $G^{\mathcal{A}}$. That is the set of vertices corresponding each to the entries in the input matrices $\mathbf{A}$ and $\mathbf{B}$. We now present the proof of Lemma~\ref{lem:domtype2}.

\begin{proof}[Proof of Lemma~\ref{lem:domtype2}]
The proof proceeds by induction on the number of Type 1 MSPs $\nu_1$ generated by $\mathcal{A}$. In the base case $\nu_1=1$ and, by Definition~\ref{def:maxsubp} the entire problem is the only, improper Type 1 MSP generated by $\mathcal{A}$. Hence, the sets
$\subpInput{}$ and $\globalInput{}$ coincide and the statement is trivially verified.
  
Assuming now inductively that the statement holds $\nu_1=k\geq 1$,  we shall show
it also holds for $\nu_1 = k+1$. 

As $\nu_1>1$, we have that the algorithm executes at least one recursive step. Let $P^{(j)}$ denote the 7 sub-problems generated at the first recursion step. We distinguish two cases:
\textbf{(a)} at least two of of the seven sub-problems $P^{(j)}$ generate each at least one Type 1 MSP; \textbf{(b)} only one of the seven sub-problems generates all $\nu_1$ Type 1 MSPs. We first address case \textbf{(a)}, as case \textbf{(b)} will follow from a simple extension.

For case \textbf{(a)}, let $G^{(j)}$, for $j=1,2,\ldots,7$ denote the seven sub-CDAGs of $G^{\mathcal{A}}$, each corresponding to one of the seven sub-problems generated in the first recursive step of $\mathcal{A}$ according to the chosen Strassen-like scheme as discussed in Section \ref{sec:algdef}. By Definition~\ref{def:maxsubp} and as, by assumption, $\mathcal{A}$ generates Type 1 MSPs, we have that each of these seven sub-problems has input size greater or equal to  $2\sqrt{M} \times 2\sqrt{M}$.  Further, each of the seven sub-problems $P^{(j)}$ generates at most $\nu_1-1$ Type 1 MSPs.

Let $Y^{(j)}$ (resp., $\subpInput{}^{(j)}$) denote the subsets of $Y$ (resp., $\subpInput{}$) in $G^{(j)}$, for $j=1,2,\ldots,7$. That is $Y^{(j)} = Y\cap \mathcal{Y}^{(j)}$, By  Lemma~\ref{lem:distinct}  the $G^{(j)}$'s have distinct input values and, hence, are pairwise vertex-disjoint sub-CDAGs of $G^{\mathcal{A}}$. Thus, the $Y^{(1)},Y^{(2)},\ldots,Y^{(7)}$ partition $Y$ (resp., $\subpInput{}^{(1)},\subpInput{}^{(2)},\ldots,\subpInput{}^{(7)}$ partition $\subpInput{}$)  and $\sum_{j=1}^7 |Y^{(j)}| = |Y|$ (resp., $\sum_{j=1}^7 |\subpInput{}^{(j)}| = |\subpInput{}|$).  Let $|Y^{(j)}|=a_j / \sqrt{b_j}$, we have $|Y|= \sum_{j=1}^7 a_j / \sqrt{b_j}$.

    By the inductive hypothesis, any dominator set $D^{(j)}$ of $Y^{(j)}$ with respect to the set $\mathcal{K}^{(j)}$ composed by the input vertices of $G^{(j)}$ must be such that $|D^{(j)}|\geq \min\{2M, a_j/\sqrt{b}_j\}$. By Definition~\ref{def:dominator} this implies that vertices in  $Y^{(j)}$ can be connected to a subset $K^{(j)}\subseteq \mathcal{K}^{(j)}$ of the input vertices of $G^{(j)}$  such that $|K^{(j)}|\geq  a_j/\sqrt{b}_j$, using vertex-disjoint paths. Since the sub-CDAGs $G^{(j)}$ are vertex disjoint, so are the paths connecting vertices in $Y^{(j)}$ to vertices in $K^{(j)}$. 
    In the following we show that it is indeed possible to extended at least $\min\{2M,\sum_{j=1}^7 a_j / \sqrt{\sum_{j=1}^7 b_j}\}$ of these paths to vertices in $\globalInput{}$ while maintaining them vertex disjoint. 
    
	According to the construction of $G^{\mathcal{A}}$ as discussed in Section~\ref{sec:CDAG}, vertices in $\globalInput{}$ corresponding to the entries of input matrix $\mathbf{A}$ (resp., $\mathbf{B}$) are connected to vertices in \sloppy $\mathcal{K}^{(1)},\mathcal{K}^{(2)},\ldots,\mathcal{K}^{(7)}$ (and, hence, $K^{(1)},K^{(2)},\ldots,K^{(7)}$)  by means of $n^2$ encoding sub-CDAGs $Enc_A$ (resp., $Enc_B$). None of these $2n^2$ encoding sub-CDAGs share any input or output vertices. No two output vertices of the same encoder sub-CDAG belong to the same sub-CDAG $G^{(j)}$, for $j=1,2,\ldots,7$. This fact ensures that for a single sub-CDAG $G^{(j)}$, for $j=1,2,\ldots,7$, it is possible to connect all the vertices in $\mathcal{K}^{(j)}$ (and, hence, $K^{(j)}$) to a subset of the vertices in $\globalInput{}$ via vertex disjoint paths.
    
	For each of the $2n^2$ encoder sub-CDAGs, let us consider the vector $\mathbf{y}_l\in\{0,1\}^7$ such that $\mathbf{y}_l[j] = 1$ iff the corresponding $j$-th output vertex of the encoder, which is an input of $G^{(j)}$, is in $K^{(j)}$. Therefore $|\mathbf{y}_l|$ equals the number of output vertices of the $l$-th encoder sub-CDAG which are in $K$.
	 From Lemma~\ref{lem:conneconder}, for each encoder sub-CDAG there exists a subset $X_l\in\globalInput$ of the input vertices of the $l$-th encoder sub-CDAG for which it is possible to connect each vertex in $X_l$ to a distinct output vertex of the $l$-th encoder sub-CDAG using vertex disjoint paths, each constituted by a singular edge with $\min\{|\mathbf{y}_l|, 1 +\lceil\left(|\mathbf{y}_l|-1\right)/2\rceil\}\leq |X_l|\leq |\mathbf{y}_l|$. The number of vertex disjoint paths connecting vertices in $\globalInput{}$, to vertices in $\cup_{j=1}^7 K^{(j)}$ is therefore at least $\sum_{l=1}^{2n^2} \min\{|\mathbf{y}_l|, 1 +\lceil\left(|\mathbf{y}_l|-1\right)/2\rceil\}$, under the constraint that $\sum_{l=1}^{2n^2} \mathbf{y}_l[j]= a_j/\sqrt{b_j}$, for $j=1,2,\ldots,7$.

		 Let us assume w.l.o.g. that $a_1/\sqrt{b_1} \geq a_2/\sqrt{b_2}\geq \ldots\geq a_7/\sqrt{b_7}$. As previously stated, it is possible to connect all vertices in $K_1$ to vertices in $\globalInput{}$ through vertex disjoint paths. Consider now all possible dispositions of the vertices in $\cup_{j=2}^7 K^{(j)}$ over the outputs of the $2n^2$ encoder sub-CDAGs. Hence, if $a_1/\sqrt{b_1}\geq 2M$ we have that there are therefore at least $M$ vertex disjoint paths connecting vertices in $\globalInput{}$ to vertices in $K_1$, and, thus, to vertices in $Y$ as desired.
		 In the following we assume $a_1/\sqrt{b_1}< M$.
		 
		Recall that the output vertices of an encoder sub-CDAG belong each to a different sub-CDAG $G^{(j)}$. From Lemma~\ref{lem:conneconder} we have that for each encoder there exists a subset $X_l\subset{X}$ of the input vertices of the $l$-th encoder sub-CDAG, with
		$
			|X_l|\geq \min \Big\{|\mathbf{y}_l|, 1 + \left\lceil\left(|\mathbf{y}_l|-1\right)/2\right\rceil \Big\} \geq \mathbf{y}_l[1] + \left(\sum_{j=2}^7 \mathbf{y}_l[j]\right)/2
		$, 		for which is possible to connect all vertices in $X_l$ to $|X_l|$ \emph{distinct} output vertices of the $l$-th encoder sub-CDAG which are in $\cup_{j=1}^7 K^{(j)}$ using $|X_l|$ vertex disjoint paths.
		As all the $Enc$ sub-CDAGs are vertex-disjoint, we can add their contributions so that the number of vertex-disjoint paths connecting vertices in $\globalInput{}$ to vertices in  $\cup_{j=1}^7 K^{(j)}$ is at least 
		\begin{align*}
		    |K^{(1)}|+ \frac{1}{2}\sum\limits_{j=2}^7|K^{(j)}| &= \frac{a_1}{\sqrt{b_1}} \frac{1}{2}\left(\sum\limits_{j=2}^7 \frac{a_j}{\sqrt{b_j}}\right)\\
		    &\geq \frac{\sum_{j=1}^7 a_j}{\sqrt{\sum_{j=1}^7 b_j}},
		\end{align*}
		where the last passage follows by applying  Lemma~\ref{lem:lastminute}. There are therefore at least $a/\sqrt{b}$ vertex disjoint paths connecting vertices in $\globalInput{}$ to vertices in $\cup_{j=1}^7 K^{(j)}$, and, thus, to vertices in $Y$ as desired. This concludes the proof for case \textbf{(a)}.\\

		For case \textbf{(b)}, only one of the seven sub-problems $P^{(j)}$ generates all $\nu_1$ Type 1 MSPs. Without loss of generality, let $P^{(1)}$ denote such sub-problem and let $G^{\mathcal{A}_{P^{(1)}}}$ be the corresponding sub-CDAG. According to the construction of $G^{\mathcal{A}}$ as discussed in Section~\ref{sec:CDAG}, vertices in $\globalInput{}$ corresponding to the entries of input matrix $\mathbf{A}$ (resp., $\mathbf{B}$) are connected to the input vertices of $G^{\mathcal{A}_{P^{(1)}}}$, by means of $n^2$ encoding sub-CDAGs $Enc_A$ (resp., $Enc_B$). None of these $2n^2$ encoding sub-CDAGs share any input or output vertices. No two output vertices of the same encoder sub-CDAG belong to the same sub-CDAG $G^{(j)}$, for $j=1,2,\ldots,7$. This fact ensures that it is possible to connect all the input vertices of $G^{\mathcal{A}_{P^{(1)}}}$ to a subset of the vertices in $\globalInput{}$ via vertex disjoint paths. The proof for case \textbf{(b)} then follows by recursively applying the arguments in the Proof of Lemma~\ref{lem:domtype2} to $G^{\mathcal{A}_{P^{(1)}}}$.
\end{proof}

\subsection{Proof of Lemma~\ref{lem:domtype1}}\label{app:domtype1}
Given an algorithm $\mathcal{A}\in \hmm{}$ used to multiply input squared matrices $\mathbf{A}, \mathbf{B} \in \ri^{n \times n}$, let $G^{\mathcal{A}}$ denote the corresponding CDAG constructed according to the description in Section~\ref{sec:CDAG}. In the following we denote as $\globalInput{}$ the set of input vertices of $G^{\mathcal{A}}$. That is the set of vertices corresponding each to the entries in the input matrices $\mathbf{A}$ and $\mathbf{B}$. Further, for each Type 2 MSP $P_i$ we denote as $\subpInput{}_i$ the set of input vertices of the sub-CDAG $G^{\mathcal{A}}_{P_i}$ associated with $P_i$. That is the set of vertices corresponding each to the entries in the input matrices $\mathbf{A}_i$ and $\mathbf{B}_i$ of $P_i$. Also, we define $\subpInput{} = \cup_{i=1}^{\nu_2} \subpInput{}_i$.

In order to simplify the presentation of the proof of Lemma~\ref{lem:domtype1} we first introduce Lemma~\ref{lem:stra_part1} which is a heavily modified version of a result previously introduced in~\cite[Lemma 6]{bilardi2017complexity}.

\begin{lemma}\label{lem:stra_part1}
Consider an algorithm $\mathcal{A}\in \hmm{}$ with input matrices $\mathbf{A},\mathbf{B}\in \ri{}^{n\times n}$ which generates $\nu_2$ Type 2 MSPs.
Given the corresponding CDAG $G^{\mathcal{A}}$, let $\setproof{}$ be a set of \emph{internal} (i.e., not input) vertices of its $\nu_2$ sub-CDAGs corresponding each to one of the generated Type 2 MSPs. For any  $Z\subseteq \subpOutput{}$ with $|Z|\geq 2|\setproof{}|$ there exist $X\subseteq\globalInput{}$  with $|X|\geq 2\sqrt{M\left(|Z|-2|\setproof{}|\right)}$ such that each vertex in $X$ is connected to some vertex in $Z$ by a directed path with no vertex in $\setproof{}$.
\end{lemma}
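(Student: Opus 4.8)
The plan is to prove Lemma~\ref{lem:stra_part1} by an induction on the number $\nu_2$ of Type 2 MSPs that parallels the proof of Lemma~\ref{lem:domtype2}, the only genuinely new content sitting in the base case. Because the $\nu_2$ sub-CDAGs $G^{\mathcal A}_{P_i}$ are pairwise vertex-disjoint (Lemma~\ref{lem:distinct}), I would first decompose $Z=\bigsqcup_{i}Z_i$ and $\setproof=\bigsqcup_{i}\setproof_i$, where $Z_i=Z\cap\subpOutput_i$ and $\setproof_i$ is the restriction of $\setproof$ to $G^{\mathcal A}_{P_i}$, and set the integers $t_i=|Z_i|-2|\setproof_i|$ (discarding every index with $t_i\le 0$, which is harmless since $\sum_i t_i=|Z|-2|\setproof|$ and dropping non-positive terms only enlarges the sum). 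Two ingredients then suffice: (i) a within-MSP flow bound giving, for each surviving $P_i$, a set $\hat X_i\subseteq\subpInput_i$ of $|\hat X_i|\ge 2\sqrt{M t_i}$ input vertices of $P_i$, each joined to $Z_i$ by a directed path avoiding $\setproof_i$; and (ii) the lifting of these sets up through the Strassen-like encoder forest lying above the MSPs, all the way to $\globalInput$, which I would carry out exactly as in the inductive step of Lemma~\ref{lem:domtype2}.

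Ingredient (i) is the heart of the argument and is where the factor $\sqrt M$ is produced. For a single MSP $P_i$, whose sub-CDAG realizes the $n_i\times n_i$ matrix-multiplication function with $n_i\ge 2\sqrt M$ (Definition~\ref{def:maxsubp}) and which may recompute values internally, the plan is to invoke the Grigoriev-flow lower bound for matrix multiplication, the recomputation-insensitive device of Bilardi and De Stefani~\cite{bilardi2017complexity}. The point is that any $|Z_i|$ outputs of an $m\times m$ product jointly depend on $\Omega\left(m\sqrt{|Z_i|}\right)$ input entries (the minimum, attained by a square $\sqrt{|Z_i|}\times\sqrt{|Z_i|}$ output block, equals $2m\sqrt{|Z_i|}\ge 4\sqrt{M|Z_i|}$ once $m\ge 2\sqrt M$), and that the Grigoriev flow certifies this connectivity to be a property of the \emph{function}, hence robust to the choice of CDAG: deleting an internal vertex set $\setproof_i$ lowers the guarantee by only a constant per deleted vertex, leaving at least $2\sqrt{M\left(|Z_i|-2|\setproof_i|\right)}$ inputs of $P_i$ still joined to $Z_i$ by $\setproof_i$-avoiding paths. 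I expect this transfer, turning the algebraic flow bound of the matrix-multiplication function into the combinatorial ``$\ge 2\sqrt{M(|Z_i|-2|\setproof_i|)}$ surviving inputs'' statement while permitting recomputation inside $P_i$, to be the main obstacle, and the place where the estimates of~\cite{bilardi2017complexity} must be re-derived for matrix-multiplication blocks of side $\ge 2\sqrt M$ rather than for the full Strassen CDAG.

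For the lifting (ii) I would reuse the machinery of Lemma~\ref{lem:domtype2}: the encoders above the MSPs are pairwise input/output-disjoint copies whose outputs fall into distinct sub-problems, so by Lemma~\ref{lem:conneconder} the set $\hat X_i$ can be transported up to distinct vertices of $\globalInput$ by vertex-disjoint paths, and at each branching of the fast recursion the seven contributions combine with the largest preserved and the remaining six halved. Running this branch-by-branch combination directly on the counts $|\hat X_i|=2\sqrt M\,\sqrt{t_i}$, that is pulling out the common factor $2\sqrt M$ and applying Lemma~\ref{lem:lastminute} with $a_i=b_i=t_i$, yields at least $2\sqrt M\,\sqrt{\sum_i t_i}=2\sqrt{M\left(|Z|-2|\setproof|\right)}$ vertex-disjoint paths reaching $\globalInput$; unlike in Lemma~\ref{lem:domtype2} the combination is run without the early truncation at $2M$, which is harmless because $2\sqrt{M(|Z|-2|\setproof|)}\le 2n^2=|\globalInput|$ whenever $M\le n^2$. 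Since every such upward path lives in the encoder forest above the MSPs it is automatically $\setproof$-free; concatenating it with the within-MSP path from its endpoint in $\hat X_i$ to $Z_i$ then exhibits $|X|\ge 2\sqrt{M(|Z|-2|\setproof|)}$ distinct vertices of $\globalInput$, each joined to $Z$ by a $\setproof$-avoiding directed path, as required.
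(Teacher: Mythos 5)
Your proposal is correct and follows essentially the same route as the paper: the paper also proceeds by induction over the fast-recursion structure, using as its base ingredient exactly your per-MSP flow bound (stated as Lemma~\ref{lem:base_connectivity}, imported from Lemma~5 of~\cite{bilardi2017complexity} and proved via the Grigoriev-flow machinery of Lemmas~\ref{lem:flowdom} and~\ref{lem:info_flo_mat_mul}, so no re-derivation is needed), and lifting through the encoder forest via Lemma~\ref{lem:conneconder} with the largest contribution preserved and the other six halved. The only differences are cosmetic: the paper organizes the argument as induction on $\nu_2$ with the inequality $\sqrt{\delta^{(1)}}+\frac{1}{2}\sum_{j=2}^{7}\sqrt{\delta^{(j)}}\geq\bigl(\sum_{j=1}^{7}\delta^{(j)}\bigr)^{1/2}$ verified by direct squaring, whereas you decompose flatly over MSPs and invoke Lemma~\ref{lem:lastminute} with $a_i=b_i=t_i$, which yields the same estimate.
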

\begin{proof}
The proof proceeds by induction on the number $\nu_2$ of Type 2 MSPs generated by $\mathcal{A}$.
In the base case $\nu_2=1$ and, by Definition~\ref{def:maxsubp} the entire problem is the only, improper, Type 2 MSP generated by $\mathcal{A}$ and, thus, the sets
$\subpInput{}$ and $\globalInput{}$ coincide. Consider now the following lemma:

\begin{lemma}[{\cite[Lemma 5]{bilardi2017complexity}}]\label{lem:base_connectivity}
Let $G^{n \times n}$ denote the CDAG corresponding to the execution of an \emph{unspecified} algorithm for the square matrix multiplication function with input matrices of size $n\times n$. Let $O'\subseteq O$ be a subset of its output vertices $O$. For any subset $\dom{}$ of the vertices of $G^{n\times n}$
with $|O'|\geq 2|\dom{}|$, there exists a set $I'\subseteq I$ of the input vertices $I$ of $G^{n\times n}$ with cardinality $|I'|\geq 2n\sqrt{|O'|-2|\dom{}|}$, such that all vertices in $I'$ are connected to some vertex in $O'$ by directed paths with no vertex in $\dom{}$.
\end{lemma}

As, by Definition~\ref{def:maxsubp}, $G^{\mathcal{A}}$ is a $G^{n\times
  n}$ CDAG, with $n\geq 2\sqrt{M}$, we can conclude that the statement of Lemma~\ref{lem:stra_part1} is verified in the base case as a consequence of Lemma~\ref{lem:base_connectivity}  and of the fact that, as previously mentioned, $\subpInput{}$ and $\globalInput{}$ coincide,
  
Lemma~\ref{lem:base_connectivity} was originally introduced in \cite[Lemma 5]{bilardi2017complexity} and is based on the analysis of the \emph{Grigoriev's flow} of the matrix multiplication function. For the sake of completeness we present the proof in Appendix~\ref{proof:basecon}.\\

Assuming now inductively that the statement holds for $\nu_2=k>1$,  we shall show
it also holds for $\nu_2 = k+1$.  As $\nu_2>1$, we have that the algorithm executes at least one recursive step. Let $P^{(j)}$ denote the 7 sub-problems generated at the first recursion step. We distinguish two cases:
\textbf{(a)} at least two of of the seven sub-problems $P^{(j)}$ generate each at least one Type 2 MSP; \textbf{(b)} only one of the seven sub-problems generates all $\nu_2$ Type 2 MSPs. We first address case \textbf{(a)}, as case \textbf{(b)} will follow from a simple extension.

For case \textbf{(a)}, let $G^{(j)}$, for $j=1,2,\ldots,7$ denote the seven sub-CDAGs of $G^{\mathcal{A}}$, each corresponding to one of the seven sub-problems generated in the first recursive step of $\mathcal{A}$ according to the chosen Strassen-like scheme as discussed in Section \ref{sec:CDAG}. By Definition~\ref{def:maxsubp} and as, by assumption, $\mathcal{A}$ generates Type 2 MSPs, we have that $n/2\geq 2\sqrt{M}$. Further, each of the seven sub-problems $P^{(j)}$ generates at most $\nu_2-1$ Type 2 MSPs.

Let $Z^{(j)}$, $\subpInput{}^{(j)}$ and $\setproof{}^{(j)}$ respectively denote the subsets of $Z$, $\subpInput{}$ and $\setproof{}$ in $G^{(j)}$, for $j=1,2,\ldots,7$. By  Lemma~\ref{lem:distinct}  the $G^{(j)}$'s have distinct input values and, hence, are pairwise vertex-disjoint sub-CDAGs of $G^{\mathcal{A}}$. Thus, $Z_1,Z_2,\ldots,Z_7$ partition $Z$, $\subpInput{}_1,\subpInput{}_2,\ldots,\subpInput{}_7$ partition $\subpInput{}$ and $\setproof{}_1,\setproof{}_2,\ldots,\setproof{}_7$ partition $\setproof{}$.  This implies $\sum_{j=1}^7 |Z^{(j)}| = |Z|$ and $\sum_{j=1}^7 |\setproof{}^{(j)}| = |\setproof{}|$. Let $\delta^{(j)} = \max\{0, |Z^{(j)}|-2|\setproof{}^{(j)}|\}$, we have $\delta = \sum_{j=1}^7 \delta^{(j)} \geq |Z| - 2|\setproof{}|$. 
	
	Applying the inductive hypothesis to each $G^{(j)}$, we have that there is a subset $Y^{(j)}\subseteq \subpInput{}^{(j)}$ with $|Y^{(j)}|\geq 4\sqrt{M\delta^{(j)}}$ such that vertices of $Y^{(j)}$ are connected  to vertices in $Z^{(j)}$ via paths with no vertex in $\setproof{}^{(j)}$.
    In the sequel the set $Y$ referred to in the statement will be identified as a suitable subset of $\cup_{i=j}^7 Y^{(j)}$ so that property (b) will be automatically satisfied. Towards property (a), we observe by the inductive hypothesis that vertices in $Y^{(j)}$ can be connected to a subset $K^{(j)}$ of the input vertices of $G^{(j)}$ with $|K^{(j)}|=|Y^{(j)}|$, using vertex-disjoint paths. Since the sub-CDAGs $G^{(j)}$ are vertex disjoint, so are the paths connecting vertices in $Y^{(j)}$ to vertices in $K^{(j)}$. It remains to show that at least $4\sqrt{M\left(|Z|-2|\setproof{}|\right)}$ of these paths can be extended to  $\globalInput{}$ while maintaining them vertex-disjoint. 
    
	According to the construction of $G^{\mathcal{A}}$ as discussed in Section~\ref{sec:CDAG}, vertices in $\globalInput{}$ corresponding to the entries of input matrix $\mathbf{A}$ (resp., $\mathbf{B}$) are connected to vertices in \sloppy $K_1,K_2,\ldots,K_7$ by means of $n^2$ encoding sub-CDAGs $Enc_A$ (resp., $Enc_B$). None of these $2n^2$ encoding sub-CDAGs share any input or output vertices. No two output vertices of the same encoder sub-CDAG belong to the same sub-CDAG $G^{(j)}$, for $j=1,2,\ldots,7$. This fact ensures that for a single sub-CDAG $G^{(j)}$, for $j=1,2,\ldots,7$, it is possible to connect all the vertices in $K_i$ to a subset of the vertices in $\globalInput{}$ via vertex disjoint paths.
    
	For each of the $2n^2$ encoder sub-CDAGs, let us consider the vector $\mathbf{y}_l\in\{0,1\}^7$ such that $\mathbf{y}_l[j] = 1$ iff the corresponding $j$-th output vertex of the encoder, which is an input of $G^{(j)}$, is in $K^{(j)}$. Therefore $|\mathbf{y}_l|$ equals the number of output vertices of the $l$-th encoder sub-CDAG which are in $K$.
	 From Lemma~\ref{lem:conneconder}, for each encoder sub-CDAG there exists a subset $X_l\in\globalInput$ of the input vertices of the $l$-th encoder sub-CDAG for which it is possible to connect each vertex in $X_l$ to a distinct output vertex of the $l$-th encoder sub-CDAG using vertex disjoint paths, each constituted by a singular edge with $\min\{|\mathbf{y}_l|, 1 +\lceil\left(|\mathbf{y}_l|-1\right)/2\rceil\}\leq |X_l|\leq |\mathbf{y}_l|$. The number of vertex disjoint paths connecting vertices in $\globalInput{}$, to vertices in $\cup_{j=1}^7 K^{(j)}$ is therefore at least $\sum_{l=1}^{2n^2} \min\{|\mathbf{y}_l|, 1 +\lceil\left(|\mathbf{y}_l|-1\right)/2\rceil\}$, under the constraint that $\sum_{l=1}^{2n^2} \mathbf{y}_l[j]= 4\sqrt{M\delta^{(j)}}$, for $j=1,2,\ldots,7$.
		 Let us assume w.l.o.g. that $\delta^{(1)} \geq \delta^{(20}\geq \ldots\geq \delta^{(7)}$. As previously stated, it is possible to connect all vertices in $K_1$ to vertices in $\globalInput{}$ through vertex disjoint paths. Consider now all possible dispositions of the vertices in $\cup_{j=2}^7 K^{(j)}$ over the outputs of the $2n^2$ encoder sub-CDAGs. 
		Recall that the output vertices of an encoder sub-CDAG belong each to a different sub-CDAG $G^{(j)}$. From Lemma~\ref{lem:conneconder} we have that for each encoder there exists a subset $X_l\subset{X}$ of the input vertices of the $l$-th encoder sub-CDAG, with
		$
			|X_l|\geq \min \Big\{|\mathbf{y}_l|, 1 + \left\lceil\left(|\mathbf{y}_l|-1\right)/2\right\rceil \Big\} \geq \mathbf{y}_l[1] + \left(\sum_{j=2}^7 \mathbf{y}_l[j]\right)/2
		$, 		for which is possible to connect all vertices in $X_l$ to $|X_l|$ \emph{distinct} output vertices of the $l$-th encoder sub-CDAG which are in $\cup_{j=1}^7 K^{(j)}$ using $|X_l|$ vertex disjoint paths.
		As all the $Enc$ sub-CDAGs are vertex disjoint, we can add their contributions so that the number of vertex disjoint paths connecting vertices in $\globalInput{}$ to vertices in  $\cup_{j=1}^7 K^{(j)}$ is at least $|K_1|+ \frac{1}{2}\sum\limits_{j=2}^7|K^{(j)}| = 4\sqrt{M}\left(\sqrt{\delta^{(1)}} + \frac{1}{2}\sum\limits_{j=2}^7\sqrt{\delta^{(j)}} \right)$.
	Squaring this quantity leads to:
	\begin{equation*}
		\left(4\sqrt{M}\left(\sqrt{\delta^{(1)}} +\frac{1}{2}\sum\limits_{j=2}^7\sqrt{\delta^{(j)}} \right)\right)^2 = 16M\left(\delta^{(1)} +\sqrt{\delta^{(1)}}\sum\limits_{j=2}^7\sqrt{\delta^{(j)}} + \left(\frac{1}{2}\sum\limits_{j=2}^7\sqrt{\delta^{(j)}}\right)^2\right).
	\end{equation*}
	As, by assumption, $\delta^{(1)} \geq\ldots \delta^{(7)}$, we have: $\sqrt{\delta^{(1)}}\sqrt{\delta^{(j)}}\geq \delta^{(j)}$ for $j=2,\ldots,7$. Thus:
	\begin{equation*}
		\left(4\sqrt{M}\left(\sqrt{\delta^{(1)}} +\frac{1}{2}\sum\limits_{j=2}^7\sqrt{\delta^{(j)}} \right)\right)^2 
		\geq 16 M \sum\limits_{j=1}^7 \delta^{(j)} \geq \left(4\sqrt{M\left(|Z|-2|\setproof{}|\right)}\right)^2.
	\end{equation*}
	There are therefore at least $4\sqrt{M\left(|Z|-2|\setproof{}|\right)}$ vertex disjoint paths connecting vertices in $\globalInput{}$ to vertices in $\cup_{j=2}^7 K^{(j)}$ as desired. This concludes the proof for case \textbf{(a)}.\\
	
	For case \textbf{(b)}, only one of the seven sub-problems $P^{(j)}$ generates all $\nu_1$ Type 1 MSPs. Without loss of generality, let $P^{(1)}$ denote such sub-problem and let $G^{\mathcal{A}_{P^{(1)}}}$ be the corresponding sub-CDAG. According to the construction of $G^{\mathcal{A}}$ as discussed in Section~\ref{sec:CDAG}, vertices in $\globalInput{}$ corresponding to the entries of input matrix $\mathbf{A}$ (resp., $\mathbf{B}$) are connected to the input vertices of $G^{\mathcal{A}_{P^{(1)}}}$, by means of $n^2$ encoding sub-CDAGs $Enc_A$ (resp., $Enc_B$). None of these $2n^2$ encoding sub-CDAGs share any input or output vertices. No two output vertices of the same encoder sub-CDAG belong to the same sub-CDAG $G^{(j)}$, for $j=1,2,\ldots,7$. This fact ensures that it is possible to connect all the input vertices of $G^{\mathcal{A}_{P^{(1)}}}$ to a subset of the vertices in $\globalInput{}$ via vertex disjoint paths. The proof for case \textbf{(b)} then follows by recursively applying the arguments in the Proof of Lemma~\ref{lem:stra_part1} to $G^{\mathcal{A}_{P^{(1)}}}$.
\end{proof}

Lemma~\ref{lem:stra_part1}, provides the base for the proof of Lemma~\ref{lem:domtype1}, which is itself a reworked version a result from~\cite{bilardi2017complexity} (Lemma 7) modified in order to account for the hybrid nature of algorithms being considered in this work.
\begin{proof}[Proof of Lemma~\ref{lem:domtype1}]
	Suppose for contradiction that $\dom$ is a dominator set for $Z$ in $G^{\mathcal{A}}$ such that $|\dom|\leq 2M-1$. 
	Let $\dom'\subseteq \dom$ be the subset of the vertices of $\dom $ composed by  vertices which are \emph{not} internal to the sub-CDAGs corresponding to the Type 2 MSPs generated, by assumption, by $\mathcal{A}$. 	
	From Lemma~\ref{lem:stra_part1}, with $Q = \dom \setminus \dom'$, there exist  $X \subseteq \globalInput{}$ and $Y\subseteq \subpInput$ with  $|X|=|Y|\geq 4\sqrt{M\left(|Z|-2\left(|\dom|-|\dom'|\right)\right)}$ such that vertices in $X$ are connected to vertices in $Y$ by vertex-disjoint paths. Hence, each vertex in $\dom'$  can be on at most one of these paths. Thus, there exists $X'\subseteq X$ and $Y'\subseteq Y$ with $|X'|=|Y'|\geq \phi=  4\sqrt{M\left(|Z|-2\left(|\dom| - |\dom'|\right)\right)} - |\dom'|$  paths from $X'$ to $Y'$ with no vertex in $\dom'$. From Lemma~\ref{lem:stra_part1}, we also have that all vertices in $Y$, and, hence, in $Y'$, are connected to some vertex in $Z$ by a path with no vertex in $\dom\setminus \dom'$. Thus, there are at least $\phi$ paths connecting vertices in $X'\subseteq \mathcal{X}$ to vertices in $Z$ with no vertex in $\dom$.  We shall now show that the contradiction assumption  $|\dom|\leq 2M-1$ implies $\phi>0$:
\begin{align*}
		\left( 4\sqrt{M\left(|Z|-2\left(|\dom| - |\dom'|\right)\right)}\right)^2 &= 16 M \left(|Z|-2\left(|\dom| - |\dom'|\right)\right), \\
		&= 16 M \left(|Z|-2|\dom|\right) + 32 M  |\dom'|.
\end{align*}
By $|\dom|\leq 2M-1$,  we have $|Z|-2|D|>4M-2(M-1)>0$. Furthermore, from $\dom'\subseteq \dom$, we have $32M > 2M-1>|\dom|\geq |\dom'|$. Therefore:
\begin{equation}\label{eq:util}
		\left(\phi + |\dom'|\right)^2=\left( 4\sqrt{M\left(|Z|-2\left(|\dom| - |\dom'|\right)\right)}\right)^2 
		> |\dom'|^2.
 \end{equation}
Again, $|\dom|\leq 2M-1$ implies $M\left(|Z|-2\left(|\dom| - |\dom'|\right)\right)> 0$. Hence, we can take the square root on both sides of (\ref{eq:util}) and conclude that $\phi > 0$. Therefore, for $|\dom|\leq 2M-1$ there are at least $\phi>0$ paths connecting a global input vertex to a vertex in $Z$ with no vertex in $\dom$, contradicting the assumption that $\dom$ is a dominator of $Z$. 
\end{proof}

\subsection{Proof of Lemma~\ref{lem:domtype3}}\label{app:domtype3}
\begin{proof}[Proof of Lemma~\ref{lem:domtype3}]
Without loss of generality, let us assume $ |\subpInput{}_i^{(\mathbf{A})}|\geq|\subpInput{}_i^{(\mathbf{B})}|$. The proof for the case $|\subpInput{}_i^{(\mathbf{A})}|<|\subpInput{}_i^{(\mathbf{B})}|$ follows an analogous argument.
Let $\dom{}$ be a dominator set for the set of vertices corresponding to $T_i'$ with respect to
$\subpInput{}_i^{(\mathbf{A})}$. 

Consider a possible assignment to the values of $B_i$ such that all the values corresponding to vertices in $\subpInput{}_i^{(\mathbf{B})}$ are assigned value 1. 
Under such assignment, for every variable $a\in A_i$ corresponding to a vertex in $\subpInput{}_i^{(\mathbf{A})}$ at least one of the elementary products in $T_i'$ assumes value $a$. The lemma follows combining statements (i) and (ii):\\ (i) There exists an
assignment of the input variables of $P_i$ corresponding to vertices in $\subpInput{}_i\setminus \subpInput{}_i^{(\mathbf{A})}$
such that the output variables in $T_i'$ assume at least
$|\ri|^{|\subpInput{}_i^{(\mathbf{A})}|}$ distinct values under all possible assignments of the variables corresponding to vertices in $\subpInput{}_i^{(\mathbf{A})}$.\\ (ii) Since all paths form $\subpInput{}_i^{(\mathbf{A})}$ to the vertices corresponding to the variables in $T_i'$ intercept $\dom$, the values of the elementary products in $T_i'$ are determined by the inputs in
$\subpInput{}_i\setminus \subpInput{}_i^{(\mathbf{A})}$, which are fixed, and by the values of the
vertices in $\dom{}$; hence the elementary products in $T_i'$ can take at most
$|\ri|^{|\dom|}$ distinct values. 
\end{proof}

\subsection{Proof of Lemma~\ref{lem:base_connectivity}}\label{proof:basecon}
Before presenting the proof of Lemma~\ref{lem:base_connectivity}, we present the concept of ``\emph{flow of a function}'' which was originally introduced by Grigoriev~\cite{grigor1976application} and then revised by Savage~\cite{savage97models}. In this work, we use the following version:

\begin{definition}[Grigoriev's flow]\label{def:flow}
A function $f:\mathcal{\ri}^p\rightarrow\mathcal{\ri}^q$ has a $w\left(u, v\right)$ Grigoriev's flow if for all subsets $X_1$ (resp., $Y_1$), of its $p$ input (resp., $q$ output) variables, with $|X_1| \geq u$ and $|Y_1| \geq  v$, there is a sub-function $h$ of $f$ obtained by making some assignment to variables of $f$ not in $X_1$  and discarding output variables not in $Y_1$, such that $h$ has at least $|\ri|^{w(u,v)}$ points in the image of its domain. 	
\end{definition}

Grigoriev's flow is an inherent property of a function, agnostic to the algorithm used to compute the function. It provides a lower bound to the amount of information that suitable sub-sets of outputs encode about suitable sub-sets of inputs. Since any information about inputs that is encoded by outputs must be also be encoded by any dominator of those outputs, we have the following lower bound on the size of dominator sets.

\begin{lemma}[{\cite[Lemma 4]{bilardi2017complexity}}]\label{lem:flowdom}
Let $G=(V,E)$ be a CDAG computing a function $f:\mathcal{\ri}^p\rightarrow\mathcal{\ri}^q$, defined on the ring $\mathcal{R}$, with $p$ input and $q$ output variables. Let $I$ (resp., $O$) denote the set of input (resp.,
output) vertices of $G$. Further let $I_X$ (resp., $O_Y$) denote the subset of $I$ (resp., $O$) which correspond to a subset $X$ (resp., $Y$) of the $p$ input (resp., $q$ output) variables of $f$. 
For any $O'\subseteq O_Y$  and any $I'\subseteq I_X$, any dominator of $O'$ with respect to $I'$ satisfies $|\dom| \geq w_{f_{X,Y}}(|I'|,|O'|)$.
\end{lemma}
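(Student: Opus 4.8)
The plan is to prove this by a counting (information-theoretic) argument that compares an upper and a lower bound on the number of distinct value-tuples that the outputs in $O'$ can assume. Write $X$ (resp.\ $Y$) for the set of input (resp.\ output) variables of $f$ corresponding to the vertices in $I'$ (resp.\ $O'$), so that $|X|=|I'|$ and $|Y|=|O'|$, and let $\dom{}$ be any dominator of $O'$ with respect to $I'$ in the sense of Definition~\ref{def:dominator}. The quantity to control is the cardinality of the image of $O'$ as the variables of $X$ vary while the remaining inputs are held fixed.

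The heart of the argument is a separation claim: if the values carried by the vertices of $\dom{}$ are fixed, together with the values of all input variables \emph{not} in $X$, then the values carried by the vertices of $O'$ are completely determined, independently of the variables in $X$. To establish this I would call a vertex $v$ \emph{protected} if every directed path from a vertex of $I'$ to $v$ meets $\dom{}$; by the dominator property every vertex of $O'$ is protected, and every vertex of $\dom{}$ is trivially protected. The key observations, proved by inspecting edges, are that (i) every immediate predecessor of a protected vertex $v\notin\dom{}$ is itself protected, since otherwise a path from $I'$ avoiding $\dom{}$ could be extended through the edge into $v$; and (ii) a protected input vertex not in $\dom{}$ cannot lie in $I'$, since its trivial one-vertex path would avoid $\dom{}$, so it corresponds to an input variable outside $X$ whose value has been fixed. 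Evaluating the protected vertices in topological order then shows that their values, and in particular those of $O'$, are a function of the values at $\dom{}$ and of the fixed inputs outside $X$ alone.

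Granting the claim, the counting proceeds in two steps. Fixing any assignment to the input variables outside $X$, as the variables of $X$ range over $\ri{}^{|X|}$ the tuple of values on $\dom{}$ ranges over at most $|\ri{}|^{|\dom{}|}$ possibilities, since each vertex stores a single ring element; hence by the separation claim the tuple of output values on $O'$ takes at most $|\ri{}|^{|\dom{}|}$ distinct values. On the other hand, by the Grigoriev's flow property of $f$ (Definition~\ref{def:flow}) instantiated with the subsets $X$ and $Y$, there exists an assignment to the variables outside $X$ under which the sub-function with output set $Y$ attains at least $|\ri{}|^{w_{f_{X,Y}}(|I'|,|O'|)}$ distinct values as the variables of $X$ vary. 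Choosing precisely this assignment for the fixed inputs and combining the two bounds yields $|\ri{}|^{w_{f_{X,Y}}(|I'|,|O'|)}\le |\ri{}|^{|\dom{}|}$, whence $|\dom{}|\ge w_{f_{X,Y}}(|I'|,|O'|)$.

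The main obstacle is the separation claim: one must argue rigorously that a dominator genuinely \emph{cuts} all information flowing from $I'$ to $O'$, i.e.\ that no residual dependence on the variables of $X$ survives once the values on $\dom{}$ are fixed. The ``protected vertex'' induction sketched above is exactly what makes this precise, with the delicate points being the correct handling of input vertices and the use of a topological ordering; the two counting steps that follow are routine once the claim is in hand.
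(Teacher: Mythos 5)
Your proof is correct and follows essentially the same information-theoretic argument as the paper, which imports this lemma from~\cite[Lemma 4]{bilardi2017complexity} with only the remark that any information about the inputs encoded by the outputs must pass through a dominator: your ``protected vertex'' induction rigorously establishes exactly that separation property, and the two counting steps (at most $|\ri|^{|\dom{}|}$ reachable output tuples versus at least $|\ri|^{w_{f_{X,Y}}(|I'|,|O'|)}$ guaranteed by the flow) match the cited proof as well as the paper's analogous argument in the proof of Lemma~\ref{lem:domtype3}.
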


A lower bound on the Grigoriev's flow for the square matrix multiplication function $f_{n \times n}: \ri^{2n^2} \rightarrow \ri^{n^2}$ over the ring $\ri$ was presented  in~\cite[Theorem 10.5.1]{savage97models}. 
 
\begin{lemma}[Grigoriev's flow of $f_{n \times n}: \ri^{2n^2} \rightarrow \ri^{n^2}$ ~\cite{savage97models}]\label{lem:info_flo_mat_mul}
	$f_{n \times n}: \ri^{2n^2} \rightarrow \ri^{n^2}$
 has a $w_{n\times n}\left(u, v\right)$ Grigoriev's flow, where:
	\begin{equation}
		w_{n \times n}\left(u, v\right) \geq \frac{1}{2}\left(v-\frac{\left(2n^2 -u\right)^2}{4n^2}\right),\mathrm{ for}\ 0\leq u \leq 2n^2,\ 0\leq v\leq n^2.
	\end{equation}
\end{lemma}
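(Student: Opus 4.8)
The plan is to verify the defining property of Grigoriev's flow (Definition~\ref{def:flow}) directly for the bilinear map $C = AB$. Fix adversarially chosen input and output subsets $X_1, Y_1$ with $|X_1| = u$ and $|Y_1| = v$; write $r = 2n^2 - u$ for the number of inputs we are allowed to assign, and let $S \subseteq [n]\times[n]$ be the set of positions of the outputs in $Y_1$. The key simplification is that, to lower bound the image size of the sub-function, it suffices to fix all inputs outside a cleverly chosen \emph{active} set $U \subseteq X_1$ (and also hold the remaining free inputs at convenient constants, which only shrinks the image and is therefore legitimate for a lower bound) so that the induced map $\ri^{U} \to \ri^{S}$ is injective; this yields $w_{n\times n}(u,v) \ge |U|$. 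The whole problem thus reduces to choosing the fixed values and the active set $U$ so that $|U| \ge \tfrac12\bigl(v - r^2/(4n^2)\bigr)$.

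I would build two competing constructions and keep the better one. In the first, set the entire matrix $B$ to a generic (Vandermonde) matrix and hold every entry of $A$ at a constant except for a chosen set of active free entries. Then, within each row $i$, the outputs $\{C[i][j] : (i,j) \in S\}$ are affine-linear forms in the active free entries of row $i$ whose coefficient matrix is a submatrix of $B$; genericity makes every such submatrix full rank, so we can independently realize $\min(f_i^A, s_i)$ outputs in row $i$, where $f_i^A$ is the number of free entries of $A$ in row $i$ and $s_i = |S \cap (\{i\}\times[n])|$. Distinct rows use disjoint variables and disjoint outputs, so the construction is injective on a set of size $\mathrm{ControlA} := \sum_i \min(f_i^A, s_i)$. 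The symmetric construction (generic $A$, varying free entries of $B$ column by column) gives injectivity on $\mathrm{ControlB} := \sum_j \min(f_j^B, s_j)$. Hence $w_{n\times n}(u,v) \ge \max\{\mathrm{ControlA}, \mathrm{ControlB}\} \ge \tfrac12(\mathrm{ControlA} + \mathrm{ControlB})$, which is where the factor $\tfrac12$ originates.

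It remains to show $\mathrm{ControlA} + \mathrm{ControlB} \ge v - r^2/(4n^2)$. I would interpret this sum through the transportation network in which each output of $S$ must be routed either to its row (with capacity $f_i^A$) or to its column (with capacity $f_j^B$): the maximum number of simultaneously routable outputs is at most $\mathrm{ControlA} + \mathrm{ControlB}$, so it suffices to bound the number of \emph{unroutable} outputs. By max-flow--min-cut this number equals $\max_{R, L}\bigl(|S \cap (R\times L)| - \sum_{i\in R} f_i^A - \sum_{j \in L} f_j^B\bigr)$, and writing $\alpha, \beta$ for the numbers of fixed entries of $A$ (resp. $B$) lying in the rows $R$ (resp. columns $L$), with $\alpha + \beta \le r$, this quantity is at most $|R||L| - n|R| - n|L| + r$; a short optimization (bounding $|R||L| \le (|R|+|L|)^2/4$ and maximizing the resulting quadratic over the admissible range, whose maximum occurs at $|R| = |L| = r/(2n)$) gives the bound $r^2/(4n^2)$, with the extremum approached exactly when the adversary fixes whole rows of $A$ and whole columns of $B$.

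The step I expect to be the main obstacle is this last combinatorial bound: extracting the exact constant $r^2/(4n^2)$ requires the min-cut characterization together with the quadratic optimization, and one must also be slightly careful that the genericity argument in the two constructions produces honest injectivity over an arbitrary ring $\ri$ (it is cleanest over a field or a sufficiently large ring; in general one restricts attention to distinct ring elements arranged in a Vandermonde pattern and counts realizable image points directly, as permitted by Definition~\ref{def:flow}).
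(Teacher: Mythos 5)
The paper does not actually prove this lemma: it is imported wholesale from Savage~\cite[Theorem~10.5.1]{savage97models}, so your attempt can only be measured against the standard textbook argument. Your architecture is the right one and matches it in spirit: two one-sided constructions (hold $B$, vary free entries of $A$ row by row, and symmetrically), the observation $w \geq \max\{\mathrm{ControlA},\mathrm{ControlB}\} \geq \tfrac12(\mathrm{ControlA}+\mathrm{ControlB})$ as the source of the factor $\tfrac12$, and a deficiency bound of $r^2/(4n^2)$ for the capacity-respecting routing of outputs to rows of $A$ or columns of $B$. Over a field (or any ring where one can invert the relevant minors, e.g.\ via a totally positive Vandermonde over $\mathbb{R}$) this is a correct proof. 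Two steps, however, do not survive scrutiny as written.

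First, the quadratic optimization is wrong as stated. Using only $\alpha+\beta\leq r$, your intermediate bound is $|R||L|-n|R|-n|L|+r \leq s^2/4 - ns + r$ with $s=|R|+|L|$, and this function is \emph{decreasing} on $[0,2n]$; its maximum over the range is $r$ at $s=0$, not $r^2/(4n^2)$ at $s=r/n$. To land on $r^2/(4n^2)$ you must also invoke $\alpha+\beta\leq n(|R|+|L|)$ (you cannot fix more entries than those rows and columns contain) and split regimes: for $ns\leq r$ the deficiency is at most $|R||L|\leq s^2/4\leq r^2/(4n^2)$, while for $ns\geq r$ monotonicity gives at most $h(r/n)=r^2/(4n^2)$. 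This is a one-line repair, but your ``maximum occurs at $|R|=|L|=r/(2n)$'' claim is false for the quadratic you wrote down. Second, and more seriously, the lemma is stated over an arbitrary ring $\ri{}$, and there your genericity step genuinely fails: over $\ri{}=\mathbb{Z}_2$ one cannot even choose distinct Vandermonde nodes, and the inequality $w\geq \mathrm{ControlA}$ is simply false over small rings. Concretely, take $n=5$; let row $1$ of $\mathbf{A}$ have free entries in columns $1,2$ with targeted outputs $\mathbf{C}[1][1],\mathbf{C}[1][2]$, and let rows $2,\dots,5$ each have a single free entry (columns $1,2,1,2$ respectively) with targeted outputs $\mathbf{C}[2][1],\mathbf{C}[3][1],\mathbf{C}[4][2],\mathbf{C}[5][2]$. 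Controlling the last four forces $\mathbf{B}[1][1]=\mathbf{B}[2][1]=\mathbf{B}[1][2]=\mathbf{B}[2][2]=1$ (the only unit of $\mathbb{Z}_2$), whence $\mathbf{C}[1][1]=\mathbf{C}[1][2]$ identically: at most $5$ outputs vary independently although $\mathrm{ControlA}=6$. The theorem is rescued precisely by the $\tfrac12$ slack, but a ring-valid proof must replace genericity by an explicit $0$/$1$ ``copying'' assignment together with a conflict-free choice of the per-row matchings (or an argument that a constant fraction of the routable outputs can always be copied without interference); your proposed patch of ``distinct ring elements in a Vandermonde pattern'' does not close this gap.
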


We can now state the proof of Lemma~\ref{lem:base_connectivity}. Recall that we denote as $G^{n \times n}$  the CDAG corresponding to the execution of a \emph{unspecified} algorithm for the square matrix multiplication function with input matrices of size $n\times n$.

\begin{proof}[Proof of Lemma~\ref{lem:base_connectivity}] The statement follows by applying the results in Lemma~\ref{lem:flowdom} and Lemma~\ref{lem:info_flo_mat_mul} to the CDAG $G^{n \times n}$. Let $I''\subseteq I$ denote the set of all input vertices of $G^{n\times n},$ such that all paths connecting these vertices  to the output vertices in $O'$ include at least a vertex in $\dom{}$ (i.e., $I''$ is the largest subset of $I$ with respect to whom $\dom{}$ is a dominator set for $O'$). 
	From Lemmas~\ref{lem:info_flo_mat_mul} and~\ref{lem:flowdom} the following must hold:
	\begin{equation}\label{eq:newlem1}
		|\dom|\geq w_{n\times n} \geq 
		\frac{1}{2}\left(|O'|-\frac{\left(2n^2 - |I''|\right)^2}{4n^2}\right).
	\end{equation}	
	Let $I' =I \setminus I''$. By the definition of $I''$, the vertices in $I'$ are exactly those that are connected to vertices in $O'$ by directed paths with no vertex in $\dom{}$. Since $|I|= 2n^2$, from (\ref{eq:newlem1}) we have $|I'|^2  \geq 4n^2\left(|O'|-2|\dom{}|\right)$.
\end{proof}

\clearpage
\section{Strassen's fast multiplication algorithm}\label{app:strassenalg}
\begin{algorithm}
\caption{Strassen's Matrix Multiplication}\label{alg:strass}
\begin{algorithmic}[1]
	\Statex \textbf{Input:} matrices $A,B$
	\Statex \textbf{Output:} matrix $C$
	\Procedure{StrassenMM}{A,B}
	\If{$n=1$}
	\State $C=A\cdot B$
	\Else
	\State Decompose $A$ and $B$ into four equally sized block matrices as follows:
	\Statex \begin{equation*}
\begin{array}{cc}
	A = \left[ \begin{array}{cc}
	A_{1,1} &A_{1,2} \\
	A_{2,1} &A_{2,2}
	\end{array}
	\right], & B = \left[ \begin{array}{cc}
	B_{1,1} &B_{1,2} \\
	B_{2,1} &B_{2,2}
	\end{array}
	\right]
\end{array}
\end{equation*}
\State $M_1 = \textsc{StrassenMM}\left(A_{1,1} + A_{2,2},B_{1,1} + B_{2,2}\right)$
\State $M_2 = \textsc{StrassenMM}\left(A_{2,1} + A_{2,2}, B_{1,1}\right)$
\State $M_3 = \textsc{StrassenMM}\left(A_{1,1},B_{1,2} - B_{2,2}\right)$
\State $M_4 = \textsc{StrassenMM}\left(A_{2,2},B_{2,1} - B_{1,1}\right)$
\State $M_5 = \textsc{StrassenMM}\left(A_{1,1} + A_{1,2}, B_{2,2}\right)$
\State $M_6 = \textsc{StrassenMM}\left(A_{2,1} - A_{1,1}, B_{1,1} + B_{1,2}\right)$
\State $M_7 = \textsc{StrassenMM}\left(A_{1,2} - A_{2,2}, B_{2,1} + B_{2,2}\right)$
\State $C_{1,1} = M_1 +M_4-M_5+M_7$ 
\State $C_{1,2} = M_3 + M_5$
\State $C_{2,1} = M_2 + M_4$
\State $C_{2,2} = M_1 -M_2 +M_3 +M_6$
\EndIf
\Return $C$
\EndProcedure
\end{algorithmic}
\end{algorithm}

The original version of  Strassen's fast matrix multiplication~\cite{strassen1969gaussian} is reported in Algorithm~\ref{alg:strass}. We refer the reader to~\cite{winograd1971multiplication} for
Winograd's variant, which reduces the number of additions. 

\begin{figure}[t!]
\begin{subfigure}{.31\textwidth}
  \centering
        \resizebox{\linewidth}{!}{
            \begin{tikzpicture}[scale = 0.5,
            > = stealth, 
            shorten > = 1pt, 
            auto,
            semithick 
        ]

        \tikzstyle{every state}=[
            draw = black,
            thick,
            minimum size = 1mm
        ]

        \node[state] at (0,0) (11) [label=below:$A_{1,1}$] {};
        \node[state] at (2,0) (12)  [label=below:$A_{1,2}$] {};
        \node[state] at (4,0) (21)  [label=below:$A_{2,1}$] {};
        \node[state] at (6,0) (22)  [label=below:$A_{2,2}$] {};
        \node[state] at (-3,4) (7)  [label=above:$7$] {};
        \node[state] at (-1,4) (5)  [label=above:$5$] {};
        \node[state] at (1,4) (4)  [label=above:$4$] {};
        \node[state] at (3,4) (1)  [label=above:$1$] {};
        \node[state] at (5,4) (3)  [label=above:$3$] {};
        \node[state] at (7,4) (2)  [label=above:$2$] {};
        \node[state] at (9,4) (6)  [label=above:$6$] {};

        \path[->] (11) edge  (5);
        \path[->] (11) edge  (1);
        \path[->] (11) edge  (3);
        \path[->] (11) edge  (6);
        
        \path[->] (12) edge  (7);
        \path[->] (12) edge  (5);
        
        \path[->] (21) edge  (2);
        \path[->] (21) edge  (6);
        
        \path[->] (22) edge  (2);
        \path[->] (22) edge  (1);
        \path[->] (22) edge  (4);
        \path[->] (22) edge  (7);
    \end{tikzpicture}
        }
  \caption{$Enc_A$}
  \label{fig:enca}
\end{subfigure}
\hspace{4mm}
\begin{subfigure}{.31\textwidth}
  \centering
  \resizebox{\linewidth}{!}{
  \begin{tikzpicture}[scale = 0.5,
            > = stealth, 
            shorten > = 1pt, 
            auto,
            node distance = 3cm, 
            semithick 
        ]

        \tikzstyle{every state}=[
            draw = black,
            thick,
            fill = white,
            minimum size = 4mm
        ]

        \node[state] at (0,0) (11) [label=below:$B_{1,1}$] {};
        \node[state] at (2,0) (12)  [label=below:$B_{1,2}$] {};
        \node[state] at (4,0) (21)  [label=below:$B_{2,1}$] {};
        \node[state] at (6,0) (22)  [label=below:$B_{2,2}$] {};
        \node[state] at (-3,4) (7)  [label=above:$7$] {};
        \node[state] at (-1,4) (5)  [label=above:$5$] {};
        \node[state] at (1,4) (4)  [label=above:$4$] {};
        \node[state] at (3,4) (1)  [label=above:$1$] {};
        \node[state] at (5,4) (3)  [label=above:$3$] {};
        \node[state] at (7,4) (2)  [label=above:$2$] {};
        \node[state] at (9,4) (6)  [label=above:$6$] {};

        \path[->] (11) edge  (4);
        \path[->] (11) edge  (1);
        \path[->] (11) edge  (2);
        \path[->] (11) edge  (6);
        
        \path[->] (12) edge  (3);
        \path[->] (12) edge  (6);
        
        \path[->] (21) edge  (7);
        \path[->] (21) edge  (4);
        
        \path[->] (22) edge  (7);
        \path[->] (22) edge  (1);
        \path[->] (22) edge  (5);
        \path[->] (22) edge  (3);
    \end{tikzpicture}
    }
  \caption{$Enc_B$}
  \label{fig:encb}
\end{subfigure}
\begin{subfigure}{.31\textwidth}
  \centering
     \resizebox{\linewidth}{!}{
  \begin{tikzpicture}[scale = 0.5,
            > = stealth, 
            shorten > = 1pt, 
            auto,
            node distance = 3cm, 
            semithick 
        ]

        \tikzstyle{every state}=[
            draw = black,
            thick,
            fill = white,
            minimum size = 4mm
        ]

        \node[state] at (0,4) (C11) [label=above:$C_{1,1}$] {};
        \node[state] at (2,4) (C12)  [label=above:$C_{1,2}$] {};
        \node[state] at (4,4) (C21)  [label=above:$C_{2,1}$] {};
        \node[state] at (6,4) (C22)  [label=above:$C_{2,2}$] {};
        \node[state] at (-3,0) (7)  [label=below:$M_7$] {};
        \node[state] at (-1,0) (5)  [label=below:$M_5$] {};
        \node[state] at (1,0) (4)  [label=below:$M_4$] {};
        \node[state] at (3,0) (1)  [label=below:$M_1$] {};
        \node[state] at (5,0) (3)  [label=below:$M_3$] {};
        \node[state] at (7,0) (2)  [label=below:$M_2$] {};
        \node[state] at (9,0) (6)  [label=below:$M_6$] {};

        \path[<-] (C11) edge  (4);
        \path[<-] (C11) edge  (1);
        \path[<-] (C11) edge  (7);
        \path[<-] (C11) edge  (5);
        
        \path[<-] (C12) edge  (3);
        \path[<-] (C12) edge  (5);
        
        \path[<-] (C21) edge  (2);
        \path[<-] (C21) edge  (4);
        
        \path[<-] (C22) edge  (6);
        \path[<-] (C22) edge  (1);
        \path[<-] (C22) edge  (2);
        \path[<-] (C22) edge  (3);
    \end{tikzpicture}
    }
  \caption{$Dec$}
  \label{fig:dec}
\end{subfigure}
\caption[Encoder and Decoder sub-CDAGs corresponding to Strassen's original algorithm]{Encoder and Decoder sub-CDAGs corresponding to Strassen's original algorithm.}
\label{fig:fig}
\end{figure}
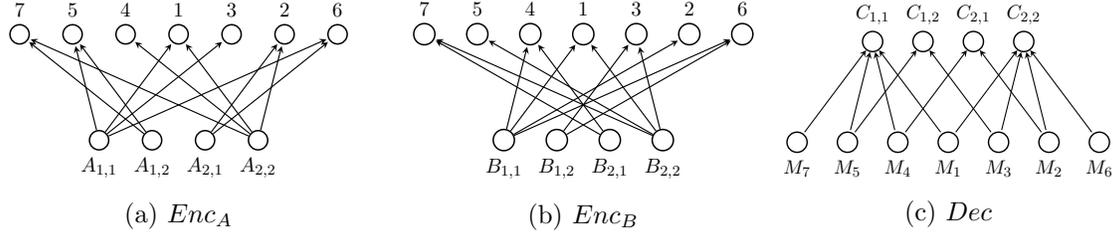
\end{document}